\renewcommand{\Pr}[1]{\ensuremath{\mathrm{Pr}\left[#1\right]}}
\newcommand{\E}[1]{\ensuremath{\mathrm{E}\left[#1\right]}}
\newcommand{\punish}{\pi}
\newcommand{\adamancy}{\alpha}
\newcommand{\tolerate}[1]{\tau_{#1}}
\newcommand{\severity}[1]{\psi_{#1}}
\newcommand{\surveil}[1]{\nu_{#1}}
\newcommand{\desire}[1]{\delta_{#1}}
\newcommand{\boldness}[1]{\beta_{#1}}
\newcommand{\action}[2]{a_{#1,#2}}
\newcommand{\estimatetol}[1]{\hat{\tau}_{#1}}
\newcommand{\estimatesev}[1]{\hat{\psi}_{#1}}
\newcommand{\estimatesur}[1]{\hat{\nu}_{#1}}
\algrenewcommand\ALG@beginalgorithmic{\small}
\algrenewcommand\alglinenumber[1]{\footnotesize #1:}
\newif\ifanon
\newif\ifcomment
\newif\iffigabbrv
\newcommand{\figtext}{\iffigabbrv Fig.\else Figure\fi}
\newcommand{\figstext}{\iffigabbrv Figs.\else Figures\fi}
\newif\ifeqabbrv
\newcommand{\eqtext}{\ifeqabbrv Eq.\else Equation\fi}
\newcommand{\eqstext}{\ifeqabbrv Eqs.\else Equations\fi}
\title{Strategic Analysis of Dissent and Self-Censorship}
\author{Joshua J. Daymude}
\affiliation{%
    \institution{Arizona State University}
    \department{Biodesign Center for Biocomputing, Security and Society}
    \department{School of Computing and Augmented Intelligence}
    \streetaddress{727 E. Tyler St.}
    \city{Tempe}
    \state{AZ}
    \postcode{85281}
    \country{USA}}
\email{jdaymude@asu.edu}
\author{Robert Axelrod}
\affiliation{%
    \institution{University of Michigan}
    \department{School of Public Policy}
    \city{Ann Arbor}
    \state{MI}
    \postcode{48109}
    \country{USA}}
\email{axe@umich.edu}
\author{Stephanie Forrest}
\affiliation{%
    \institution{Arizona State University}
    \department{Biodesign Center for Biocomputing, Security and Society}
    \department{School of Computing and Augmented Intelligence}
    \streetaddress{727 E. Tyler St.}
    \city{Tempe}
    \state{AZ}
    \postcode{85281}
    \country{USA}}
\affiliation{%
    \institution{Santa Fe Institute}
    \city{Santa Fe}
    \state{NM}
    \postcode{87501}
    \country{USA}}
\email{steph@asu.edu}
\begin{abstract}
    Expressions of dissent against authority are an important feature of most societies, and efforts to suppress such expressions are common.
    Modern digital communications, social media, and Internet surveillance and censorship technologies are changing the landscape of public speech and dissent.
    Especially in authoritarian settings, individuals must assess the risk of voicing their true opinions or choose \textit{self-censorship}, voluntarily moderating their behavior to comply with authority.
    We present a model in which individuals strategically manage the tradeoff between expressing dissent and avoiding punishment through self-censorship while an authority adapts its policies to minimize both total expressed dissent and punishment costs.
    We study the model analytically and in simulation to derive conditions separating \textit{defiant individuals} who express their desired dissent in spite of punishment from \textit{self-censoring individuals} who fully or partially limit their expression.
    We find that for any population, there exists an authority policy that leads to total self-censorship.
    However, the probability and time for an initially moderate, locally-adaptive authority to suppress dissent depend critically on the population's willingness to withstand punishment early on, which can deter the authority from adopting more extreme policies.
\end{abstract}
\begin{document}

\maketitle

\section{Introduction}

Dissent against authority is ubiquitous throughout human history and culture~\cite{Hsiao2020-versobook}, as are authorities' efforts to maintain power, stability, and social order.
But modern Internet infrastructure, encrypted communication protocols, social media platforms, and automated surveillance technologies pose new opportunities and threats for potential dissidents~\cite{Lokot2021-protestsquare,Feldstein2021-risedigital}.
Fueled by the last decade of advances in computer vision~\cite{Kalluri2025-computervisionresearch}, facial recognition tools in particular have been used around the world to track and arrest protesters, inducing a chilling effect~\cite{Mozur2019-hongkong,Vincent2020-nypdused,AmnestyInternational2023-automatedapartheid,Loewenstein2023-palestinelaboratory,Loucaides2024-changingface,Morris2024-whyfacial}.
Internet censorship systems such as China's Great Firewall~\cite{Ensafi2015-analyzinggreat,Roberts2018-censoreddistraction,Fedasiuk2021-buyingsilence} and Russia's ``TSPU'' devices for centralized, nation-wide, real-time deep packet inspection~\cite{Epifanova2021-subjugatingrunet,Xue2022-tspurussia} monitor traffic for site access or material considered objectionable.
The Cyberspace Administration of China explicitly signals its ability to punish what it finds: ``The cyber police are right by your side... You will exercise restraint and rationality when you post and write messages''~\cite{CAC2015-guangdonginternet}.
The U.S.\ values free speech as a constitutional right; however, there are numerous tensions around privacy and acceptable speech, ranging from U.S.\ technology giants' heterogeneous and opaque practices of ``content moderation''~\cite{Gillespie2018-custodiansinternet,MyersWest2018-censoredsuspended,Common2020-fearreaper,Wilson2021-hatespeech,Chowdhury2022-examiningfactors,Nicholas2022-sheddinglight,Toraman2022-blacklivesmatter2020,HumanRightsWatch2023-metabroken,Pierri2023-howdoes,Shahid2023-decolonizingcontent,Kaplan2025-morefree} to the FBI's four million warrantless data queries of American's ``incidentally collected'' private communications in 2020--2023~\cite{ODNI2024-annualstatistical} to the January--February 2025 Executive Orders aiming to ``end federal censorship'' while simultaneously deeming any ``diversity, equity, and inclusion'' language and programs within the federal government illegal~\cite{Trump2025-restoringfreedom,Trump2025-endingradical,Trump2025-endingillegal}.

In these settings, individuals must make careful choices about if, when, and how to express dissent: Do they voice their true opinions in spite of the potential consequences, or do they engage in \textit{self-censorship}~\cite{Loury1994-selfcensorshippublic,Horton2011-selfcensorship,Bar-Tal2017-selfcensorshipsociopoliticalpsychological,Shen2021-searchselfcensorship}---voluntarily moderating their behavior to avoid punishment by the authority---thus granting the authority an implicit form of control?
These considerations point to a set of strategic questions about self-censorship: when it emerges, when it is resisted, its relationship to different punishment regimes, and how interactions between the population and the authority change these dynamics over time.

This work resides in the gap between the intuitive sense and formative theory of civil resistance as a distributed, strategic, and uncertain process~\cite{Ackerman2008-strategicdimensions,Schelling1971-questionscivilian,Sharp1973-politicsnonviolent} and the population-level, comparative, statistical methods typically used to study them~\cite{Chenoweth2011-whycivil,Chenoweth2021-civilresistance}.
We present a model comprising a single authority and a population of individuals.
The individuals strategically decide what level of dissent to express based on the strength of their inherent desire to dissent, their boldness, and the perceived risk of punishment by the authority.
We derive analytical results for the individuals' optimal actions under two forms of authoritarian punishment, uniform and proportional, and compare the model's predictions to contemporary examples.
Finally, we investigate how a population's desired dissent and boldness distributions influence a locally-adaptive authority's policies, and how that in turn affects population behavior.

\section{Model}

We model a single authority $A$ (e.g., a government, corporation, or social media platform) governing a population of $n$ individuals $\mathcal{I} = \{I_1, \ldots, I_n\}$ (e.g., citizens, employees, or users, respectively).
Each individual $I_i$ has a level of \textit{desired dissent} denoted by $\desire{i} \in [0, 1]$, where $\desire{i} = 0$ is no desire and $\desire{i} = 1$ is maximal.  
In each round $r = 1, 2, \ldots$, each individual $I_i$ takes an \textit{action} with a level of expressed dissent denoted by $\action{i}{r} \in [0, \desire{i}]$.
The authority then observes each action with probability
\begin{equation} \label{eq:observe}
    \Pr{A \text{ observes } \action{i}{r}} = \surveil{r} + (1 - \surveil{r}) \cdot \action{i}{r},
\end{equation}
where $\surveil{r} \in [0, 1]$ is the authority's \textit{surveillance} in round $r$.
This reflects an assumption that the more dissenting an action is, the more likely the authority is to observe it.
The authority then punishes each individual $I_i$ according to its \textit{punishment function} $\punish$ based on the actions it observes, its \textit{tolerance} $\tolerate{r} \in [0,1]$, and its \textit{severity} $\severity{r} > 0$ in round $r$:
\begin{equation} \label{eq:punish}
    \punish(\action{i}{r}, \tolerate{r}, \severity{r}, \surveil{r}) =
    \left\{\begin{array}{cl}
        0 & \text{if $A$ does not observe $\action{i}{r}$} \text{or $\action{i}{r} \leq \tolerate{r}$}; \\
        \severity{r} & \text{if $A$ observes $\action{i}{r} > \tolerate{r}$} \text{and $\punish$ is \underbar{uniform}}; \\
        \severity{r} \cdot (\action{i}{r} - \tolerate{r}) & \text{if $A$ observes $\action{i}{r} > \tolerate{r}$} \text{and $\punish$ is \underbar{proportional}}.
    \end{array}\right.
\end{equation}
We consider two functional forms for $\punish$: a \textit{uniform} function where the authority punishes all infractions equally and a \textit{proportional} (linear) function where ``the punishment fits the crime.''
For example, a government restricting press access for any reporter it views as antagonistic is uniform punishment, while fining a speeding driver based on their infraction over the speed limit is proportional punishment.
In digital spaces, an example of uniform punishment is Internet shutdowns (i.e., blocking all Internet access regardless of content) while social media companies' content moderation policies often employ proportional punishment, with consequences ranging from content removal to account demonetization to permanent bans depending on the type and frequency of violations~\cite{YouTube2025-youtubecommunity}.
Only actions that exceed the authority's tolerance $\tolerate{r}$ and are observed by the authority are punished.

The authority's utility at the end of round $r$ reflects its preference for (\textit{i}) as little dissent as possible while (\textit{ii}) incurring as little punishment cost as possible: 
\begin{equation} \label{eq:authutility}
    U_{A,r} = -\adamancy \cdot \sum_{i=1}^n \action{i}{r} - \sum_{i=1}^n \punish(\action{i}{r}, \tolerate{r}, \severity{r}, \surveil{r}),
\end{equation}
where $\adamancy > 0$ is the authority's \textit{adamancy}, which reflects the relative importance of expressed dissent compared to the cost of punishing. 
The utility of an individual $I_i$ at the end of round $r$ represents its preference for (\textit{i}) acting as closely as possible to its desire while (\textit{ii}) avoiding punishment:
\begin{equation} \label{eq:indutility}
    U_{i,r} = \boldness{i} \cdot (1 - \desire{i} + \action{i}{r}) - \punish(\action{i}{r}, \tolerate{r}, \severity{r}, \surveil{r}),
\end{equation}
where $\boldness{i} > 0$ is a constant representing how \textit{bold} individual $I_i$ is, i.e., the relative strength of its preference to act according to its desires relative to the need to avoid punishment.

\section{Results}

\subsection{Individuals' Optimal Actions: Compliance, Self-Censorship, and Defiance}

We begin by analyzing and interpreting the optimal (utility-maximizing) action $\action{i}{r}^*$ for an individual $I_i$ in round $r$ when the authority's parameters are fixed.
Beyond its own desired dissent $\desire{i}$ and boldness $\boldness{i}$, we assume that individual $I_i$ knows the authority's punishment function $\punish$ and the values of the authority's tolerance $\tolerate{r}$, severity $\severity{r}$, and surveillance $\surveil{r}$.\footnote{All results in this section also hold if the authority's exact parameter values are replaced by individuals' noisy estimates thereof; see SI Appendix for details.}
Complete derivations of all results in this section are given in the SI Appendix.

\subsubsection{Uniform Punishment}

\begin{figure}[t]
    \centering
    \includegraphics[width=0.38\textwidth]{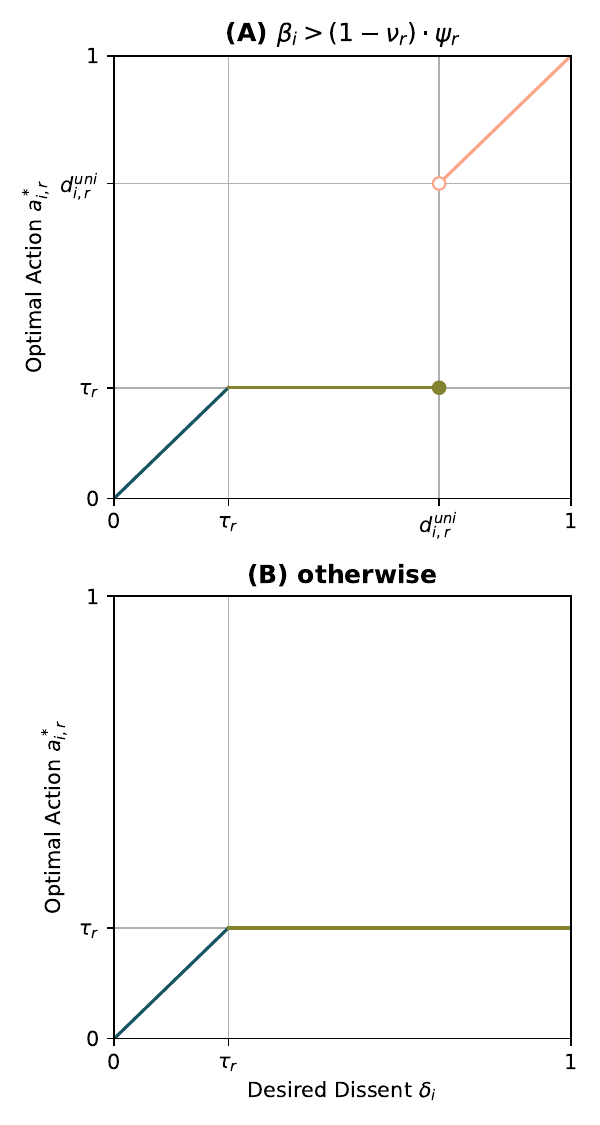}
    \caption{\textit{Individuals' Optimal Actions Under Uniform Punishment.}
    The optimal action of an individual $I_i$ as a function of its desired dissent $\desire{i}$ when $\punish$ is uniform (\eqtext~\ref{eq:optaction:uniform}) and \textbf{(A)} sufficient boldness $\boldness{i} > (1 - \surveil{r}) \cdot \severity{r}$ makes defiance possible or \textbf{(B)} otherwise, when full self-censorship always occurs above $\tolerate{r}$.
    Compliance is shown in teal, self-censorship in olive, and defiance in peach.}
    \label{fig:optaction:uniform}
\end{figure}

When $\punish$ is uniform, the optimal action for an individual $I_i$ is 
\begin{equation} \label{eq:optaction:uniform}
    \action{i}{r}^* = \left\{\begin{array}{cl}
        \desire{i} & \text{if $\desire{i} \leq \tolerate{r}$} \text{or $\big(\boldness{i} > (1 - \surveil{r}) \cdot \severity{r}$ and $\desire{i} > d_{i,r}^{\text{uni}}\big)$}; \\
        \tolerate{r} & \text{otherwise},
    \end{array}\right.
\end{equation}
where
\begin{equation} \label{eq:optaction:dcon}
    d_{i,r}^{\text{uni}} = \frac{\boldness{i} \cdot \tolerate{r} + \surveil{r} \cdot \severity{r}}{\boldness{i} - (1 - \surveil{r}) \cdot \severity{r}}.
\end{equation}
\figtext~\ref{fig:optaction:uniform} shows $\action{i}{r}^*$ as a function of desired dissent $\desire{i}$.

This result categorizes individual responses into three types according to their desired dissents: (\textit{i}) \textit{compliant} individuals whose desired dissents are below the authority's tolerance, enabling them to act as they prefer without punishment; (\textit{ii}) \textit{self-censoring} individuals whose desired dissent exceeds the authority's tolerance but who act at the authority's tolerance level instead, preferring to avoid punishment to expressing dissent; and (\textit{iii}) \textit{defiant} individuals whose boldness and desired dissents are so large that the utility lost by not acting as they prefer outweighs the cost of punishment.
Of these three types, only self-censoring individuals act at lower levels of dissent than their desires.

The most interesting result of this analysis is the distinction between self-censorship and defiance. 
Individuals in both categories desire to act above the authority's tolerance, but analysis of \eqstext~\ref{eq:optaction:uniform}--\ref{eq:optaction:dcon} shows that defiance requires $\boldness{i} > \severity{r}/(1 - \tolerate{r})$; i.e., any defiant individual must be at least as bold as the severity of punishment.

\subsubsection{Proportional Punishment}

\begin{figure}[t]
    \centering
    \includegraphics[width=\textwidth]{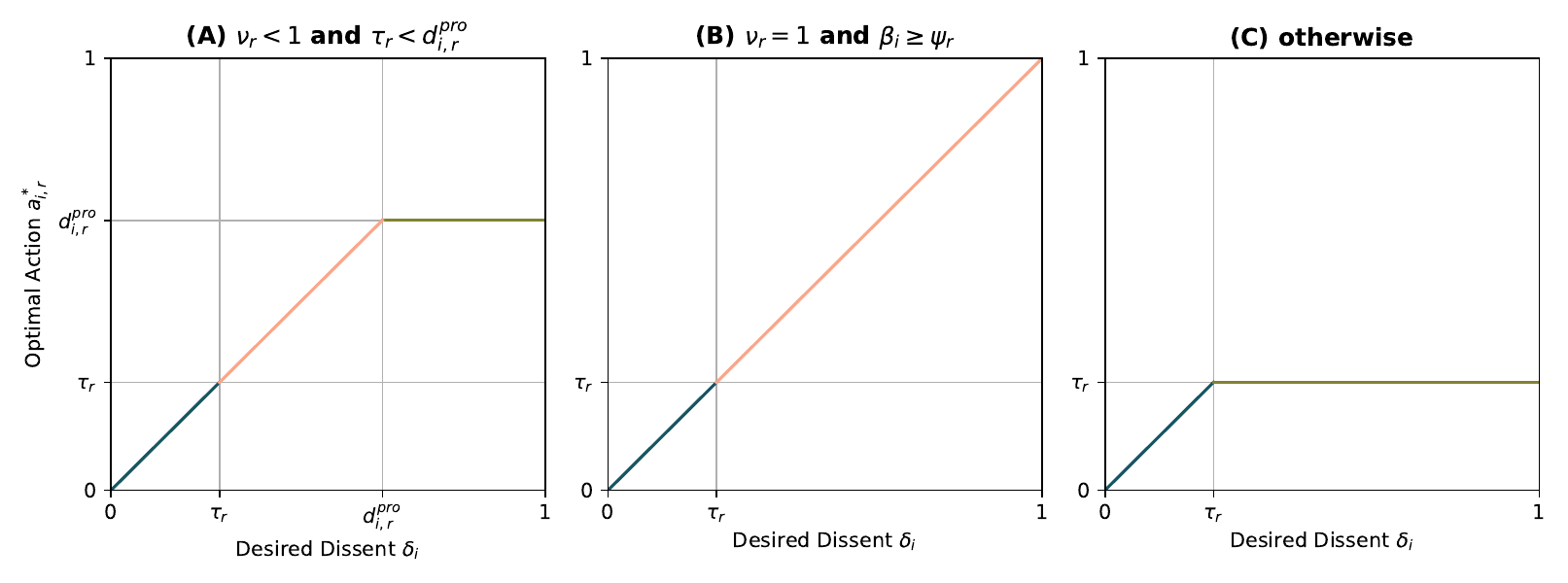}
    \caption{\textit{Individuals' Optimal Actions Under Proportional Punishment.}
    The optimal action of an individual $I_i$ as a function of its desired dissent $\desire{i}$ when $\punish$ is proportional (\eqtext~\ref{eq:optaction:prop}) and \textbf{(A)} the authority's surveillance is imperfect with $\surveil{r} < 1$ and the authority's tolerance $\tolerate{r} < d_{i,r}^{\text{pro}}$ is insufficient to eliminate defiance, \textbf{(B)} the authority's surveillance is perfect with $\surveil{r} = 1$ but the individual's boldness $\boldness{i} \geq \severity{r}$ yields only defiance above $\tolerate{r}$, or \textbf{(C)} otherwise, when full self-censorship always occurs above $\tolerate{r}$.
    Compliance is shown in teal, self-censorship in olive, and defiance in peach.}
    \label{fig:optaction:prop}
\end{figure}

When $\punish$ is proportional to the level of expressed dissent, the optimal action for an individual $I_i$ is
\begin{equation} \label{eq:optaction:prop}
    \action{i}{r}^* = \left\{\begin{array}{cl}
        \desire{i} & \text{if $\desire{i} \leq \tolerate{r}$ or ($\surveil{r} = 1$ and $\boldness{i} \geq \severity{r}$) or ($\surveil{r} < 1$ and $\desire{i} \leq d_{i,r}^{\text{pro}}$)}; \\
        \tolerate{r} & \text{if $\desire{i} > \tolerate{r}$ and (($\surveil{r} = 1$ and $\boldness{i} < \severity{r}$) or ($\surveil{r} < 1$ and $\tolerate{r} \geq d_{i,r}^{\text{pro}}$))}; \\
        d_{i,r}^{\text{pro}} & \text{otherwise},
    \end{array}\right.
\end{equation}
where
\begin{equation} \label{eq:optaction:dpro}
    d_{i,r}^{\text{pro}} = \frac{1}{2}\left(\tolerate{r} + \frac{\boldness{i} - \surveil{r} \cdot \severity{r}}{(1 - \surveil{r}) \cdot \severity{r}}\right).
\end{equation}

As in the case when $\punish$ is uniform, there are \textit{compliant} individuals whose preferred actions are below the authority's tolerance, allowing them to act as desired without punishment.
But surprisingly, in the regime of imperfect surveillance, proportional punishment inverts the parameter regions where which self-censorship and defiance occur (\figtext~\ref{fig:optaction:prop}A).
Here, \textit{defiant} individuals who act as they desire despite potential punishment are not those with the largest desired dissent---as they are when $\punish$ is uniform---but those with intermediate desires.
Intuitively, because both the probability of being observed and the subsequent punishment if observed scale linearly with the infraction, small infractions can be worth the relatively low risk of detection, if it means acting as desired.
However, above the tipping point where the risk of detection and cost of subsequent punishment outweigh desire, individuals become \textit{partially self-censoring}, acting below their desired dissent but still above the authority's tolerance.
The existence and precise location of the tipping point depends critically on the relation between boldness, tolerance, severity, and surveillance.
This partial self-censorship arises only under proportional punishment.

If the authority's surveillance is perfect ($\surveil{r} = 1$), observing every infraction, then an individual's optimal action depends only on the relation between its boldness and the authority's severity: if $\boldness{i} \geq \severity{r}$, then $I_i$ is defiant, acting exactly as it desires regardless of punishment (\figtext~\ref{fig:optaction:prop}B); otherwise, it self-censors to avoid punishment (\figtext~\ref{fig:optaction:prop}C).
Thus, perfect surveillance removes the optimal action's dependence on desired dissent: defiant individuals are sufficiently bold; fully self-censoring individuals are not.

\subsection{Contemporary Illustrations of the Model}

\begin{figure}[t]
    \centering
    \includegraphics[width=\textwidth]{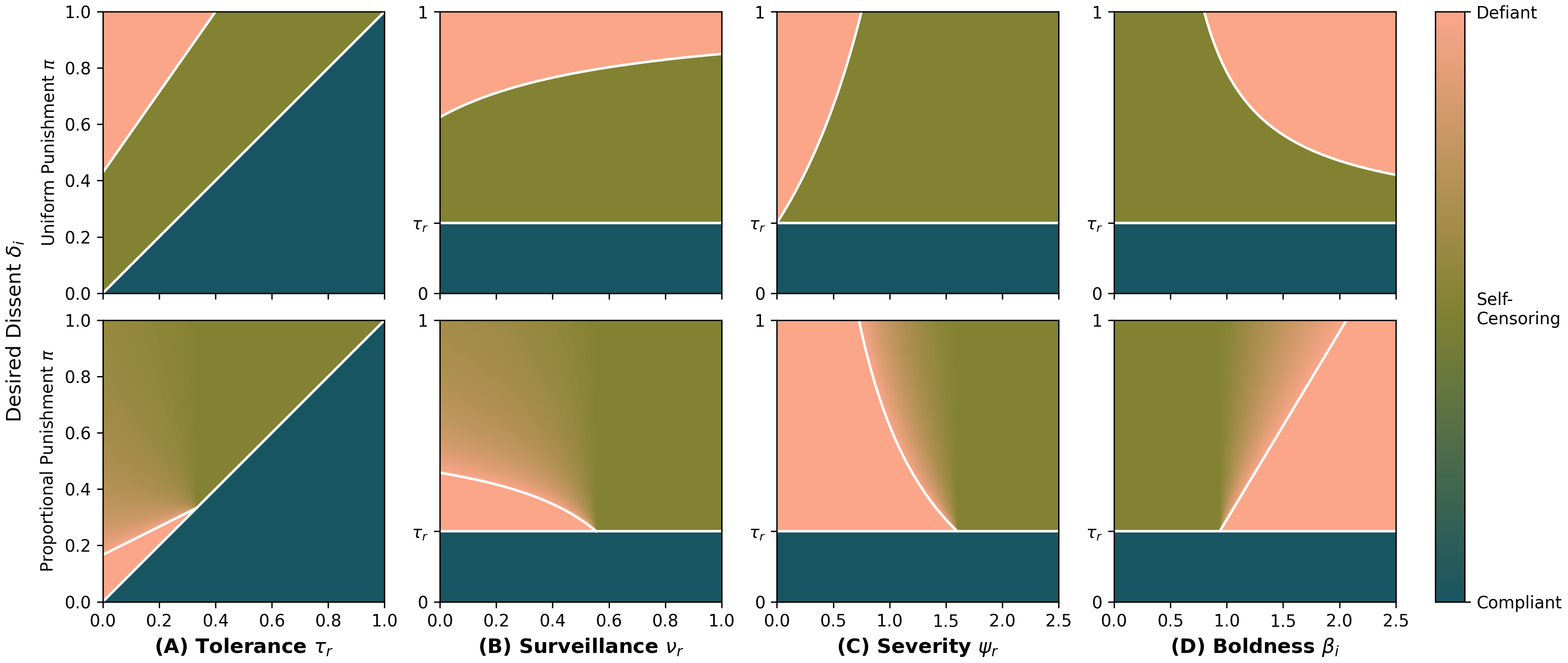}
    \caption{\textit{Phase Diagram of Individuals' Optimal Behaviors by Parameter.}
    Compliant (teal), self-censoring (olive), and defiant (peach) phases of individuals' optimal actions are shown as a function of \textbf{(A)} the authority's tolerance $\tolerate{r}$, \textbf{(B)} the authority's surveillance $\surveil{r}$, \textbf{(C)} the authority's severity $\severity{r}$, and \textbf{(D)} the individual's boldness $\boldness{i}$ vs.\ their desired dissent $\desire{i}$ for both uniform (top row) and proportional (bottom row) punishment.
    The phase boundaries delineating defiance (peach) from self-censorship (olive) occur at critical dissent values $d_{i,r}^{\text{uni}}$ (\eqtext~\ref{eq:optaction:dcon}, top row) and $d_{i,r}^{\text{pro}}$ (\eqtext~\ref{eq:optaction:dpro}, bottom row).
    The specific parameter values used to obtain this visualization were chosen to show all possible phases of behavior in each pairwise sweep, but they are representative of a wider range of parameter values.}
    \label{fig:phases}
\end{figure}

Contemporary examples provide natural illustrations of the phases of behavior predicted by our model (\figtext~\ref{fig:phases}).
As a simple example, after evidence of NSA/PRISM online surveillance was widely publicized in June 2013 (i.e., recognition of increased surveillance without a corresponding change in tolerance or severity), Internet traffic to Wikipedia articles on privacy-sensitive topics significantly and immediately decreased~\cite{Penney2016-chillingeffects}, demonstrating a self-censoring chilling effect consistent with our model's predictions (\figtext~\ref{fig:phases}B).
Survey results from 2017 by the same author found that personal legal threats (i.e., credible threats of high-severity punishment) and guarantees of government surveillance drove greater rates of self-censorship than relevant statues communicating the authority's tolerance~\cite{Penney2017-internetsurveillance}, aligning qualitatively with the roles of these parameters in \eqstext~\ref{eq:optaction:dcon} and~\ref{eq:optaction:dpro} (\figstext~\ref{fig:phases}A--C).

Both China and Russia can be assumed to have low tolerance of dissent and high-severity proportional punishments.
Further, both regimes have made significant investments in their digital infrastructure to achieve self-censorship, as illustrated by the famous analogy of the ``anaconda in the chandelier''~\cite{Link2002-chinaanaconda} and China's 2019 ranking as the most digitally repressive country in the world~\cite{Feldstein2021-risedigital}.
In these settings, our model predicts that only low levels of dissent are commonly expressed, as small infractions are both less likely to be observed and have low expected punishment cost (\figtext~\ref{fig:optaction:prop}A).
For example, the popular 2021 meme of ``Lying Flat'' in China expressed disaffection with hustle culture and contemporary globalization, but was largely confined to online spaces where the stakes were low~\cite{Zhou2023-lyingflat}.
This is reminiscent of the dictator's dilemma~\cite{Wintrobe1998-politicaleconomy} which suggests that authoritarian repression leads to distorted information feedback, creating space for low-risk dissent to flourish.
However, Russia's recently-deployed TSPU~\cite{Epifanova2021-subjugatingrunet,Xue2022-tspurussia} has even deeper surveillance and more granular censorship controls than China's Great Firewall and related censorship technologies~\cite{Ensafi2015-analyzinggreat,Roberts2018-censoreddistraction,Fedasiuk2021-buyingsilence}.
Our model predicts that, holding other parameters constant, Russia's increased surveillance capabilities will drive greater rates of self-censorship than in China (\figtext~\ref{fig:phases}B).

In the social media space where surveillance is high, a study that measured self-censorship on Twitter during the Hong Kong protests of 2022--23~\cite{Wang2023-selfcensorshiplaw} found that after the passage of a new national security law (i.e., a decrease in tolerance and increase in severity), Hong Kong users self-censored by deleting or restricting their accounts, removing old posts, and moving away from politically sensitive topics (\figtext~\ref{fig:phases}A,C).
In the opposite direction (increased tolerance and no severity), users of Gab.com accelerated rates of hate speech after establishing that Gab's content moderation policies tolerated such behavior and did not threaten any serious consequences~\cite{Mathew2020-hatebegets}.
Similarly, Meta's January 2025 reversal of content moderation policies is an example of simultaneously increasing tolerance and removing all severity \cite{Kaplan2025-morefree}.

\subsection{An Adaptive Authority}

The authority's utility minimizes both the population's total expression of dissent (i.e., its \textit{political cost}) and its \textit{punishment cost} (\eqtext~\ref{eq:authutility}).
Perhaps unsurprisingly, there always exists a draconian policy for the authority to suppress all dissent through self-censorship without having to enact any punishments.
Formally, given any population's desired dissent and boldness parameters $\{(\desire{i}, \boldness{i})\}_{i=1}^n$, one utility-maximizing policy for the authority is
\begin{align}
    \tolerate{}^* &= 0, \nonumber \\
    \surveil{}^* &= 1, \text{ and} \label{eq:draconian} \\
    \severity{}^* &> \left\{\begin{array}{cl}
        \max_i\{\desire{i} \cdot \boldness{i}\} & \text{if $\punish$ is uniform;} \\
        \max_i\{\boldness{i}\} & \text{if $\punish$ is proportional.}
    \end{array}\right.\nonumber
\end{align}
In words, this draconian policy tolerates nothing, observes everything, and threatens to punish so extremely that all individuals completely self-censor, expressing no dissent at all.
This policy is, in fact, a Nash equilibrium for the leader--follower game in which the authority acts first and the individuals act second, without collusion and under the assumption that the authority's threat of punishment is fully credible.
The lower bounds on severity $\severity{}^*$ result from \eqstext~\ref{eq:optaction:uniform}--\ref{eq:optaction:dpro} when $\tolerate{r} = 0$ and $\surveil{r} = 1$; see SI Appendix for a derivation.
Notably, the draconian severity $\severity{}^*$ for uniform punishment is at most that of proportional punishment.

In reality, it is unlikely that an authority knows all individuals' desired dissent and boldness parameters a priori or exists in a political context where such an extreme policy could be introduced immediately.
Thus, we also consider how an initially moderate authority might incrementally adapt its policies over time to increase its utility, using feedback through the political and punishment costs incurred in each round to determine how to change its policy.
A round is the time it takes for the authority to observe dissent and enact a new policy, which we take to be roughly days to weeks.

We frame the authority's policy adaptation as the local greedy process of random mutation hill climbing~\cite{Mitchell1993-whenwill}.
The authority is initialized with a moderate policy $(\tolerate{0}, \severity{0}, \surveil{0})$ chosen uniformly at random in $[0, 1]^3$.
Subsequently, in each round, it chooses one of its three parameters at random, mutates that parameter to a new value within an $\varepsilon$-radius of its original value, and adopts that as the new policy.
If the new policy decreases the authority's total cost, it is retained; otherwise, the mutation is reverted (see SI Appendix, Algorithm 1 for details).
This simple rule captures incremental policy change by an initially moderate authority, ignoring any exogenous shocks.

\begin{figure}[t]
    \centering
    \includegraphics[width=0.55\textwidth]{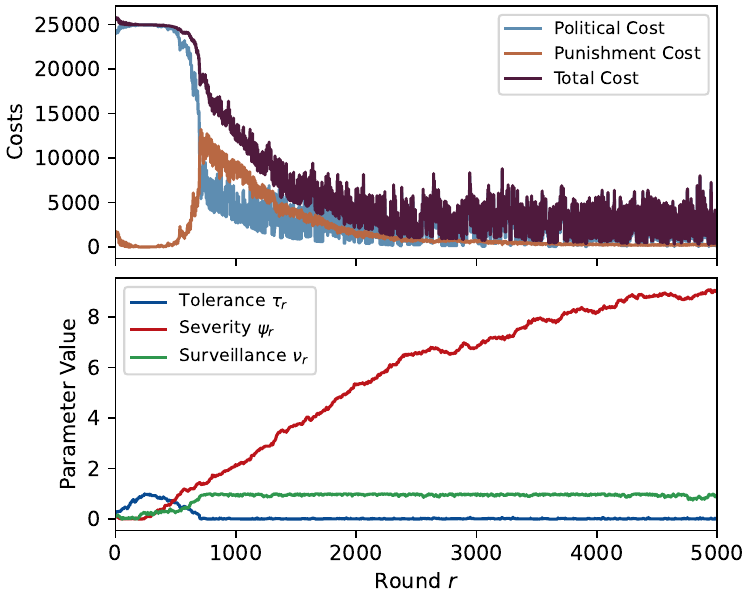}
    \caption{\textit{Time Evolution of an Adaptive Authority.}
    The adaptive authority's costs (top) and parameters (bottom) over \numprint{5000} rounds with uniform punishment and adamancy $\adamancy =$ 1.
    The population has $n =$ \numprint{100000} individuals whose desired dissents are sampled independently from a truncated exponential distribution with mean 0.25 and bounds [0, 1].
    Likewise, their boldness constants are sampled independently from an exponential distributions with mean 2.0.
    Recall that the authority's tolerance $\tolerate{r}$ and surveillance $\surveil{r}$ (bottom, blue and green, resp.) are bounded in $[0, 1]$ while its severity $\severity{r}$ (bottom, red) can be any positive value.
    Corresponding results for proportional punishment can be found in SI Appendix, \figtext~\ref{si:fig:adaptiveauthorityevo}.}
    \label{fig:adaptiveauthorityevo}
\end{figure}

\figtext~\ref{fig:adaptiveauthorityevo} shows an example run of this adaptive authority's evolving policies and corresponding costs over time.
Initially ($r = 0$), the authority has low tolerance (\figtext~\ref{fig:adaptiveauthorityevo}, bottom, blue).
With insufficient severity and surveillance to back that preference (\figtext~\ref{fig:adaptiveauthorityevo}, bottom, red and green), the authority incurs high political cost as the entire population expresses its desired dissent (\figtext~\ref{fig:adaptiveauthorityevo}, top, blue).
Perhaps counterintuitively, the authority does not respond by immediately raising its severity and surveillance, as this would saddle it with the costly task of punishing the vast majority of the population.
Instead, it keeps severity and surveillance low while gradually increasing tolerance until all dissent is allowed (\figtext~\ref{fig:adaptiveauthorityevo}, bottom, blue shows $\tolerate{r} = 1$ near round $r \approx 250$).
The authority then reverses course ($250 \leq r \leq 750$), gradually decreasing its tolerance while increasing its severity and surveillance.
This causes a cascade of self-censorship: High-dissent individuals start by giving up just a small amount of expressed dissent to stay within the authority's tolerance, but then capitulate further and further to match the authority's decreasing tolerance.
As severity and surveillance increase simultaneously, it becomes too costly to regain defiance.
Individuals who are sufficiently bold remain defiant through the cascade, causing an initial spike in the authority's punishment costs (\figtext~\ref{fig:adaptiveauthorityevo}, top, orange, $r \approx 750$), but even they are progressively persuaded to self-censor by continuously increasing punishment severity and high surveillance (\figtext~\ref{fig:adaptiveauthorityevo}, bottom, red and green, $r \geq \numprint{1000}$).
By round $r = \numprint{5000}$, the authority approximates the draconian policy, achieving near-total suppression of dissent.
Similar patterns occur across wide parameter ranges (\figtext~\ref{fig:adaptiveauthoritysweep}) and under both punishment functions (see SI Appendix, \figtext~\ref{si:fig:adaptiveauthorityevo}).

The sequence shown in \figtext~\ref{fig:adaptiveauthorityevo} is reminiscent of Mao's 1956--57 ``Hundred Flowers Campaign'' in which outspoken dissident elites were encouraged to freely criticize the government for the dual purpose of relieving tensions and strengthening the socialist system~\cite{Macfarquhar1989-hundredflowers}.
After a brief period of escalating critique that culminated in popular demands for power sharing~\cite{Lin2006-transformationchinese}---i.e., large political costs resulting from high tolerance---Mao reversed course with the Anti-Rightist Movement, sending thousands of outspoken dissidents to penal camps and labor exile---i.e., a policy of little-to-no tolerance and high severity---dissuading those who remained from expressing further dissent.

\begin{figure}[t]
    \centering
    \includegraphics[width=0.6\textwidth]{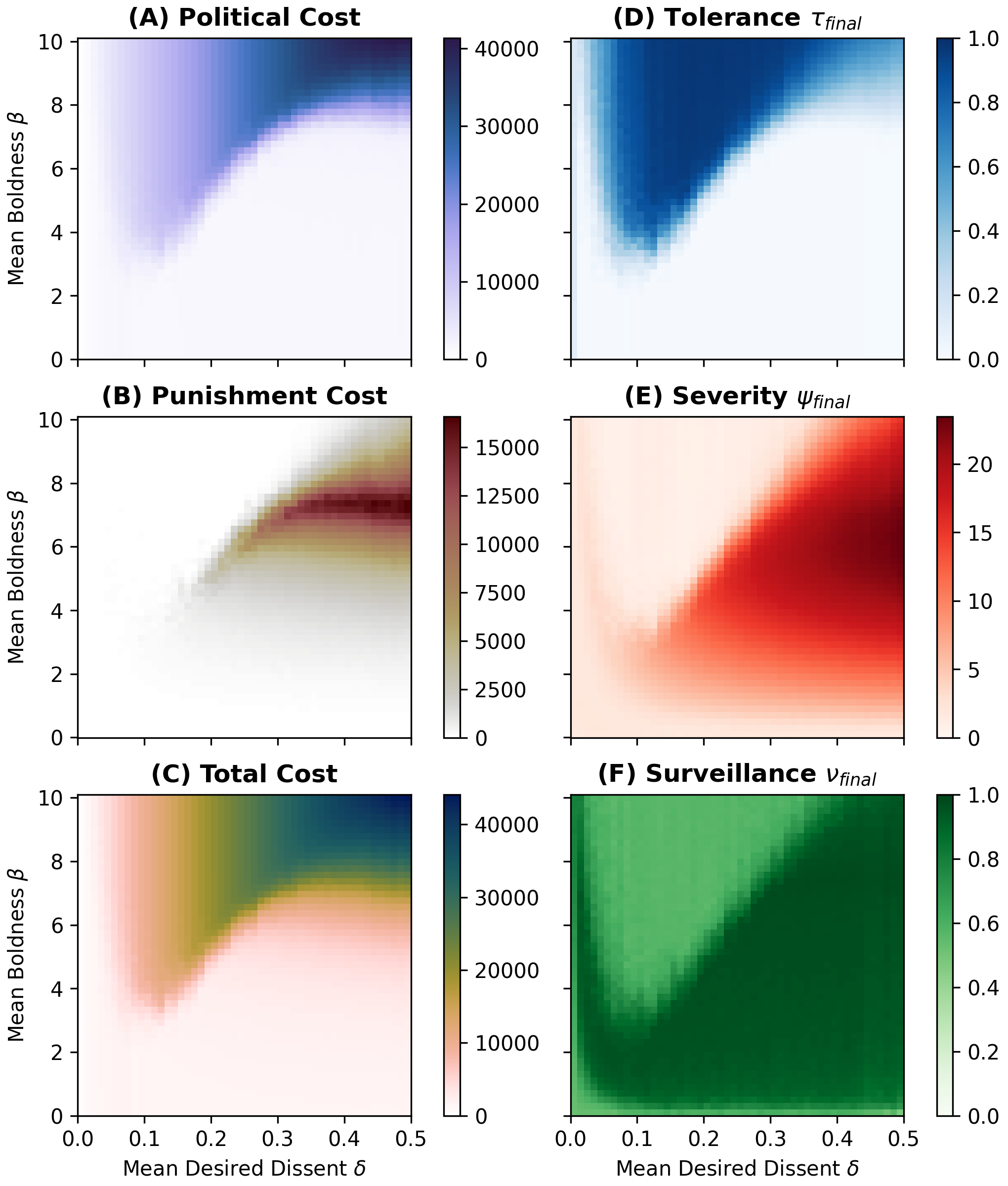}
    \caption{\textit{Authority Costs and Policies After Adaptation Under Uniform Punishment.}
    The authority's average final \textbf{(A)}--\textbf{(C)} costs and \textbf{(D)}--\textbf{(F)} parameter values after \numprint{10000} rounds of random mutation hill climbing adaptation across 50 independent trials per $(\desire{}, \boldness{})$ pair, where $\desire{} \in [0.005, 0.495]$ and $\boldness{} \in [0.1, 10]$ are the means of the population's exponential distributions of desired dissents and boldness constants, respectively.
    Each trial involves $n =$ \numprint{100000} individuals and the authority uses uniform punishment $\punish$ with adamancy $\adamancy =$ 1.
    Corresponding results for proportional punishment can be found in SI Appendix, \figtext~\ref{si:fig:adaptiveauthoritysweep}.}
    \label{fig:adaptiveauthoritysweep}
\end{figure}

Although a draconian policy always exists, the adaptive authority's probability of discovering one depends on the population's desired dissent and boldness distributions (\figtext~\ref{fig:adaptiveauthoritysweep}).
In particular, \figtext~\ref{fig:adaptiveauthoritysweep}C shows a curved boundary delineating a high-boldness region where the authority incurs large total costs from a lower-boldness region of minimal total costs.
In the latter, the authority has discovered a near-draconian policy that neither tolerates dissent (\figtext~\ref{fig:adaptiveauthoritysweep}A,D, lower region) nor enacts much punishment (\figtext~\ref{fig:adaptiveauthoritysweep}B, lower region); instead, with high surveillance and severity that increases with population boldness (\figtext~\ref{fig:adaptiveauthoritysweep}E--F, lower region), it achieves mass self-censorship.
On the other hand, when the population's mean desired dissent is moderate ($\delta \in [0.05, 0.35]$) and mean boldness $\boldness{}$ is sufficiently high, the authority almost never discovers a suppressing policy (\figtext~\ref{fig:adaptiveauthoritysweep}C, upper region).
Instead, the authority tolerates essentially all dissent (\figtext~\ref{fig:adaptiveauthoritysweep}A,D, upper region), reducing severity and making surveillance irrelevant (\figtext~\ref{fig:adaptiveauthoritysweep}E--F, upper region).
This bifurcation of suppressing vs.\ tolerating policies also occurs under proportional punishment (SI Appendix, \figtext~\ref{si:fig:adaptiveauthoritysweep}), with a critical mean boldness $\boldness{}^*$ separating the two.

Between these stable outcomes of suppression through self-censorship and tolerance of all dissent is a small region of realized punishment for moderate-high desired dissent, moderate-boldness populations (\figtext~\ref{fig:adaptiveauthoritysweep}B, maroon region).
This is not a distinct third outcome, but rather a snapshot of transient dynamics in which an authority has successfully driven low-desire/boldness individuals to self-censor but must punish numerous high-desire/boldness individuals who defiantly express dissent (similar to \figtext~\ref{fig:adaptiveauthorityevo}, $r \approx 750$).
Even though the authority will eventually achieve near-total suppression of these populations, the time it takes to do so depends primarily on the boldness of the population.
Thus, while no dissent movement can last forever against increasingly severe authoritarian punishment, increasing a population's mean boldness has a super-linear effect on the length of time that population can sustain dissent and attempt to enact change (SI Appendix, \figtext~\ref{si:fig:suppressiontime}).

\section{Discussion}

Individuals aim to express a desired level of dissent while avoiding punishment.  This leads them to be compliant, self-censoring, or defiant depending on their and the authority's parameters (\figstext~\ref{fig:optaction:uniform}--\ref{fig:optaction:prop}).
Under uniform punishment, when all infractions are punished identically, self-censorship occurs when individuals have an intermediate level of desired dissent,  while a sufficiently high desire leads to defiance (\figtext~\ref{fig:optaction:uniform}A).
Surprisingly, under proportional punishment, the pattern is reversed, with moderate individuals acting as desired while extreme individuals partially self-censor (\figtext~\ref{fig:optaction:prop}A).
In both settings, a critical value of desired dissent separates self-censorship from defiance (\eqstext~\ref{eq:optaction:dcon} and~\ref{eq:optaction:dpro}).
As a fraction of the population, self-censorship increases as the authority's tolerance, severity, or surveillance increases, or as individual boldness decreases (\figtext~\ref{fig:phases}).
There is always a draconian policy with which the authority achieves total self-censorship (\eqtext~\ref{eq:draconian}).
However, when an initially moderate authority adapts incrementally, the population's mean boldness---i.e., individuals' willingness to express dissent in spite of punishment---determines whether the authority suppresses dissent and how long it takes to do so (\figstext~\ref{fig:adaptiveauthorityevo}--\ref{fig:adaptiveauthoritysweep},~\ref{si:fig:adaptiveauthorityevo}--\ref{si:fig:suppressiontime}).

Neither an authority's ability to observe dissent nor self-censorship are new phenomena.
For example, leading a mass gathering or publishing a critical article with a real byline were always visible actions, and many countries' laws prohibit sedition and blasphemy and support substantial self-censorship. 
But modern surveillance technologies have blurred the lines between public and private actions, potentially exposing even minor expressions of dissent to scrutiny and punishment.
As ubiquitous surveillance becomes less an issue of technical capability than one of willingness, norms, and regulations~\cite{Kalluri2025-computervisionresearch,Loewenstein2023-palestinelaboratory,Xue2022-tspurussia,Feldstein2021-risedigital}, it will be increasingly challenging to preserve a culture of free expression (\figstext~\ref{fig:phases}B and~\ref{fig:adaptiveauthorityevo}).

The model shows that the most powerful lever a population has for sustaining dissent is its boldness, i.e., the strength of individuals' preferences to act according to their desired dissent.
Under uniform punishment, high boldness is a necessary condition for any individual to express dissent (\eqtext~\ref{eq:optaction:uniform}); under proportional punishment and perfect surveillance ($\surveil{r} = 1$), the boldness-to-severity ratio completely determines whether defiance or self-censorship occurs (\figstext~\ref{fig:optaction:prop}B--C).
More importantly, the larger a population's mean boldness, the more slowly an initially moderate, locally-adaptive authority discovers a policy that suppresses dissent, if it ever does (\figstext~\ref{fig:adaptiveauthoritysweep},~\ref{si:fig:suppressiontime}).
For example, early and widespread public backlash to privacy-invasive technologies such as China's Green Dam Youth Escort system in 2009~\cite{Yang2011-abortedgreen} and the recent Microsoft Recall feature in Windows 11~\cite{Cunningham2024-microsoftreworking,Cunningham2024-windowsrecall} significantly postponed both projects, even though both eventually reemerged in modified forms.
Threat of punishment may encourage individuals to self-censor, but our results suggest that an adaptive authority does not enact large-scale punishment until after the masses have become compliant (\figtext~\ref{fig:adaptiveauthorityevo}).
It is simply too costly for the authority to punish everyone, but once the vast majority of individuals become compliant or self-censoring, it is cost-effective for the authority to enact increasingly severe punishments against remaining dissidents.
This is reminiscent of the often-quoted first lesson of Snyder's \textit{On Tyranny}: ``Do not obey in advance''~\cite{Snyder2017-tyrannytwenty}.
Preemptive surrender through self-censorship, before punishment is imposed, is a fast path to authoritarian control.
Thus, a population's persistent dissent in spite of threatened punishment early on may deter an authority from adopting more extreme policies.

Promising directions for future work include extending the analyses and simulations to include individuals that adapt their desired level of dissent and boldness values over time.
These parameter adaptations could be strategic in response to the authority's changing policies, or they could arise from opinion dynamics with individuals arranged in networks of social influence akin to Penney's theory of chilling effects as social conformity~\cite{Penney2022-understandingchilling}.
Our preliminary results (see SI Appendix) suggest that assimilative opinion dynamics alone---whether on desired dissent, boldness, or both---are insufficient for a small minority of high-desire, high-boldness dissidents to drive a larger cascade of defiance akin to Chenoweth and Stephan's empirical ``3.5\% Rule''~\cite{Chenoweth2011-whycivil,Chenoweth2021-civilresistance}.
Further investigation could characterize more carefully which behaviors and parameters are most salient in determining when a networked population behaves differently from the setting we considered here.

\begin{acks}
    We thank our reviewers for their valuable feedback and pointers to relevant contemporary examples and empirical studies of self-censorship.
    We also thank Jedidiah Crandall for helpful discussions about Internet censorship technologies, David Peterson for his inputs on the preliminary networked population simulations, Anish Nahar for his preliminary simulation results, and Garrett Parzych for his preliminary analyses of adaptive authorities.
    J.J.D.\ is supported in part by the NSF (CCF-2312537).
    S.F.\ is supported in part by the NSF (CCF-2211750), ARPA-H (SP4701-23-C-0074), and the Santa Fe Institute.
    R.A.\ thanks the University of Michigan for research support.
\end{acks}

\bibliographystyle{ACM-Reference-Format}
\bibliography{ref}

\appendix

\renewcommand{\thefigure}{S\arabic{figure}}
\setcounter{figure}{0}
\renewcommand{\theequation}{S\arabic{equation}}
\setcounter{equation}{0}

\section{Supporting Information (SI Appendix)}

This Supporting Information details all derivations of analytical results, the experimental setup and reproducibility information for simulation results, and supplementary results not included in the Main Text.
Mathematica source code verifying the derivations and Python source code for the simulations are openly available at \url{https://doi.org/10.5281/zenodo.15150163}.

\subsection{Derivations of Individuals' Optimal Actions}

The following derivations support the analytical results for the optimal action of an individual $I_i$ with desired dissent $\desire{i}$ and boldness $\boldness{i}$.
We assume $I_i$ knows the authority's punishment function $\punish$ but only has (potentially noisy) estimates $\estimatetol{r}$, $\estimatesev{r}$, and $\estimatesur{r}$ of the authority's tolerance, severity, and surveillance in round $r$, respectively.
Formally, we solve for
\begin{equation}
    \action{i}{r}^*
    = \mathrm{argmax}_{\action{i}{r}} \{\E{U_{i,r}}\}
    = \mathrm{argmax}_{\action{i}{r}} \left\{\boldness{i} \cdot (1 - \desire{i} + \action{i}{r}) - \E{\punish\left(\action{i}{r}, \estimatetol{r}, \estimatesev{r}, \estimatesur{r}\right)}\right\},
\end{equation}
where expectation is taken over the authority's probability of observing the action $\action{i}{r}$.

\begin{theorem}[\eqstext~\ref{eq:optaction:uniform}--\ref{eq:optaction:dcon}] \label{thm:optaction:uniform}
    When $\punish$ is \textbf{uniform}, the optimal action for an individual $I_i$ is
    \[\action{i}{r}^* = \left\{\begin{array}{ll}
        \desire{i} & \text{if $\desire{i} \leq \tolerate{r}$} \text{ or $\big(\boldness{i} > (1 - \surveil{r}) \cdot \severity{r}$ and $\desire{i} > d_{i,r}^{\text{uni}}\big)$}; \\
        \tolerate{r} & \text{otherwise},
    \end{array}\right.
    \quad \text{where} \quad
    d_{i,r}^{\text{uni}} = \frac{\boldness{i} \cdot \tolerate{r} + \surveil{r} \cdot \severity{r}}{\boldness{i} - (1 - \surveil{r}) \cdot \severity{r}}.\]
\end{theorem}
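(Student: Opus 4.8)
The plan is to maximize $\E{U_{i,r}}$ directly over the feasible interval $\action{i}{r} \in [0, \desire{i}]$, exploiting the fact that the objective is piecewise affine once we split at the tolerance $\tolerate{r}$. First I would substitute the observation probability~\eqref{eq:observe} into the uniform branch of~\eqref{eq:punish} to obtain the expected punishment as a function of the action: it is $0$ for $\action{i}{r} \le \tolerate{r}$ and $\severity{r}\cdot(\surveil{r} + (1-\surveil{r})\action{i}{r})$ for $\action{i}{r} > \tolerate{r}$. Plugging this into~\eqref{eq:indutility} shows that on $[0,\tolerate{r}]$ the expected utility equals $\boldness{i}(1-\desire{i}+\action{i}{r})$, which is strictly increasing; on $(\tolerate{r}, \desire{i}]$ it equals $\boldness{i}(1-\desire{i}+\action{i}{r}) - \severity{r}\surveil{r} - \severity{r}(1-\surveil{r})\action{i}{r}$, an affine function with slope $\boldness{i} - (1-\surveil{r})\severity{r}$; and at $\action{i}{r} = \tolerate{r}$ there is a (weakly) downward jump of size $\severity{r}(\surveil{r} + (1-\surveil{r})\tolerate{r}) \ge 0$.

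From here the argument is a short case analysis on the location of the feasible interval and the sign of the upper slope. If $\desire{i} \le \tolerate{r}$, the feasible region sits entirely in the increasing lower branch, so $\action{i}{r}^* = \desire{i}$ (full compliance). Otherwise $\desire{i} > \tolerate{r}$, and the only candidate maximizers are $\action{i}{r} = \tolerate{r}$ (the maximum of the lower branch) and $\action{i}{r} = \desire{i}$ (the maximum of the upper branch, relevant only when its slope is positive). If $\boldness{i} \le (1-\surveil{r})\severity{r}$, the upper branch is non-increasing, so combined with the downward jump its supremum over $(\tolerate{r},\desire{i}]$ is strictly dominated by the value at $\tolerate{r}$; hence $\action{i}{r}^* = \tolerate{r}$ (full self-censorship). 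If $\boldness{i} > (1-\surveil{r})\severity{r}$, the upper branch increases to its right endpoint $\desire{i}$, so the decision reduces to comparing $\E{U_{i,r}}$ at $\action{i}{r} = \desire{i}$ against its value at $\action{i}{r} = \tolerate{r}$.

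That final comparison produces the threshold $d_{i,r}^{\text{uni}}$: requiring $\boldness{i} - \severity{r}\surveil{r} - \severity{r}(1-\surveil{r})\desire{i} \ge \boldness{i}(1-\desire{i}+\tolerate{r})$ and solving for $\desire{i}$ — dividing by the quantity $\boldness{i} - (1-\surveil{r})\severity{r}$, which is positive in this case — yields exactly $\desire{i} \ge d_{i,r}^{\text{uni}}$, so defiance $\action{i}{r}^* = \desire{i}$ is optimal precisely when $\desire{i} > d_{i,r}^{\text{uni}}$, with ties broken toward $\tolerate{r}$; this matches \eqstext~\ref{eq:optaction:uniform}--\ref{eq:optaction:dcon}. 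Along the way I would also record that $d_{i,r}^{\text{uni}} > \tolerate{r}$ always holds in this regime, and that $d_{i,r}^{\text{uni}} \le 1$ (so defiance is achievable by some $\desire{i} \in [0,1]$) exactly when $\boldness{i} > \severity{r}/(1-\tolerate{r})$, recovering the remark after the theorem. I expect the only genuinely delicate point to be the discontinuity of $\E{U_{i,r}}$ at $\action{i}{r} = \tolerate{r}$: because the objective is not continuous there, one cannot merely differentiate, and the downward jump must be tracked explicitly to certify that $\tolerate{r}$ itself — not a point just above it — is the competitor to $\desire{i}$. Everything else is routine affine bookkeeping, and the identical computation goes through verbatim with the noisy estimates $\estimatetol{r}, \estimatesev{r}, \estimatesur{r}$ substituted for the true parameters.
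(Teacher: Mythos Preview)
Your proposal is correct and follows essentially the same approach as the paper: compute the expected punishment using~\eqref{eq:observe}, observe that $\E{U_{i,r}}$ is piecewise affine in $\action{i}{r}$ with the only candidate maximizers being $\tolerate{r}$ and $\desire{i}$, and compare the two to extract the threshold $d_{i,r}^{\text{uni}}$. Your treatment is in fact slightly more careful than the paper's---you explicitly separate the sign of the upper slope from the downward jump at $\tolerate{r}$, whereas the paper simply asserts that linearity forces the extremum to an endpoint---but the argument and all computations are otherwise identical, including the extension to noisy estimates and the recovery of the $\boldness{i} > \severity{r}/(1-\tolerate{r})$ condition.
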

\begin{proof}
    When $\punish$ is uniform, we have $\punish\left(\action{i}{r}, \estimatetol{r}, \estimatesev{r}, \estimatesur{r}\right) = \estimatesev{r}$ when $\action{i}{r} > \estimatetol{r}$ and $0$ otherwise, where $\estimatetol{r} \in [0, 1]$ and $\estimatesev{r} > 0$ are the individual's estimates of the authority's tolerance and severity in round $r$, respectively.
    Recall that an individual does not expect to be punished if it acts below its estimation of the authority's tolerance, and otherwise expect to be punished only if it is observed by the authority.
    So the expected punishment is
    \begin{equation}
        \E{\punish\left(\action{i}{r}, \estimatetol{r}, \estimatesev{r}, \estimatesur{r}\right)} = \left\{\begin{array}{ll}
            0 & \text{if $\action{i}{r} \leq \estimatetol{r}$}; \\
            \estimatesev{r} \cdot (\estimatesur{r} + (1 - \estimatesur{r}) \cdot \action{i}{r}) & \text{otherwise.}
        \end{array}\right.
    \end{equation}
    The corresponding expected utility $\E{U_{i,r}}$ is
    \begin{align}
        \E{U_{i,r}} &= \boldness{i} \cdot (1 - \desire{i} + \action{i}{r}) - \E{\punish\left(\action{i}{r}, \estimatetol{r}, \estimatesev{r}, \estimatesur{r}\right)} \\
        &= \left\{\begin{array}{ll}
            \boldness{i} \cdot (1 - \desire{i} + \action{i}{r}) & \text{if $\action{i}{r} \leq \estimatetol{r}$}; \\
            \boldness{i} \cdot (1 - \desire{i} + \action{i}{r}) - \estimatesev{r} \cdot (\estimatesur{r} + (1 - \estimatesur{r}) \cdot \action{i}{r}) & \text{otherwise.}
        \end{array}\right.
    \end{align}
    
    First suppose that $\desire{i} \leq \estimatetol{r}$, i.e., the individual's desired dissent is below its estimation of the authority's tolerance.
    Then $\E{U_{i,r}} = \boldness{i} \cdot (1 - \desire{i} + \action{i}{r})$, which increases linearly with $\action{i}{r}$, so the optimal action in this case is $\action{i}{r}^* = \desire{i}$.
    Intuitively, there is no punishment for $I_i$ acting as it wants to, so it does so.
    
    Now suppose $\desire{i} > \estimatetol{r}$.
    By the same logic as in the previous case, $\E{U_{i,r}}$ increases linearly as $\action{i}{r}$ increases from 0 to $\estimatetol{r}$.
    At $\action{i}{r} = \estimatetol{r}$, $\E{U_{i,r}}$ is discontinuous as the authority begins punishing any actions it observes.
    Then, as $\action{i}{r}$ increases from $\estimatetol{r}$ to $\desire{i}$, $\E{U_{i,r}}$ is both increasing linearly as the individual acts closer to its desire and decreasing linearly as the probability of observation (and subsequent uniform punishment) grows.
    The combination of these effects is linear in $\action{i}{r}$, so the two candidates for optimal action in this case are $\action{i}{r}^* \in \{\estimatetol{r}, \desire{i}\}$; i.e., either the individual should act as dissenting as it can without being punished, or it should accept punishment and act according to its desired dissent.
    We find that
    \begin{equation}
        \E{U_{i,r} \mid \action{i}{r} = \desire{i}} > \E{U_{i,r} \mid \action{i}{r} = \estimatetol{r}}
        \iff \boldness{i} > (1 - \estimatesur{r}) \cdot \estimatesev{r} \text{ and } \desire{i} > \frac{\boldness{i} \cdot \estimatetol{r} + \estimatesur{r} \cdot \estimatesev{r}}{\boldness{i} - (1 - \estimatesur{r}) \cdot \estimatesev{r}}.
    \end{equation}
    
    Combining these two cases and considering the special case when the individuals' estimates are correct yields the theorem.
    If the estimates are noisy or otherwise inaccurate, replacing $\tolerate{r}$, $\severity{r}$, and $\surveil{r}$ with $\estimatetol{r}$, $\estimatesev{r}$, and $\estimatesur{r}$ in the theorem yields the individual's utility-maximizing action with respect to its estimates.
\end{proof}

\begin{theorem}[\eqstext~\ref{eq:optaction:prop}--\ref{eq:optaction:dpro}] \label{thm:optaction:proportional}
    When $\punish$ is \textbf{proportional}, the optimal action for an individual $I_i$ is
    \[\action{i}{r}^* = \left\{\begin{array}{ll}
        \desire{i} & \text{if $\desire{i} \leq \tolerate{r}$ or $(\surveil{r} = 1$ and $\boldness{i} \geq \severity{r})$ or $(\surveil{r} < 1$ and $\desire{i} \leq d_{i,r}^{\text{pro}})$}; \\
        \tolerate{r} & \text{if $\desire{i} > \tolerate{r}$ and $\big((\surveil{r} = 1$ and $\boldness{i} < \severity{r})$ or $(\surveil{r} < 1$ and $\tolerate{r} \geq d_{i,r}^{\text{pro}})\big)$}; \\
        d_{i,r}^{\text{pro}} & \text{otherwise},
    \end{array}\right.\]
    where
    \[d_{i,r}^{\text{pro}} = \frac{1}{2}\left(\tolerate{r} + \frac{\boldness{i} - \surveil{r} \cdot \severity{r}}{(1 - \surveil{r}) \cdot \severity{r}}\right).\]
\end{theorem}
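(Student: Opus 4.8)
The plan is to mirror the proof of Theorem~\ref{thm:optaction:uniform}, the key new feature being that under proportional punishment the expected cost above tolerance is \emph{quadratic} in the action rather than linear, which is precisely what creates the third ($d_{i,r}^{\text{pro}}$) branch. With $\punish$ proportional, the expected punishment is $\E{\punish(\action{i}{r}, \estimatetol{r}, \estimatesev{r}, \estimatesur{r})} = \estimatesev{r}\cdot(\action{i}{r} - \estimatetol{r})\cdot(\estimatesur{r} + (1-\estimatesur{r})\cdot\action{i}{r})$ when $\action{i}{r} > \estimatetol{r}$ and $0$ otherwise, so $\E{U_{i,r}}$ equals $\boldness{i}(1 - \desire{i} + \action{i}{r})$ on $[0,\estimatetol{r}]$ and that quantity minus the above on $(\estimatetol{r},\desire{i}]$. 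First I would note that, unlike the uniform case, $\E{U_{i,r}}$ is \emph{continuous} at $\action{i}{r} = \estimatetol{r}$ (the punishment term vanishes from both sides), so the maximizer is obtained by ordinary calculus on the two pieces. If $\desire{i} \leq \estimatetol{r}$ the feasible interval is $[0,\desire{i}] \subseteq [0,\estimatetol{r}]$, the utility is increasing, and $\action{i}{r}^* = \desire{i}$.

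Assume now $\desire{i} > \estimatetol{r}$. On $[0,\estimatetol{r}]$ the utility rises linearly, so its maximum over that subinterval is attained at $\estimatetol{r}$; by continuity it therefore suffices to maximize over $[\estimatetol{r},\desire{i}]$. Expanding $\E{U_{i,r}}$ there and differentiating in $\action{i}{r}$: when $\estimatesur{r} = 1$ the punishment term is linear with total slope $\boldness{i} - \estimatesev{r}$, so the piece is nondecreasing exactly when $\boldness{i} \geq \estimatesev{r}$ (giving $\action{i}{r}^* = \desire{i}$) and decreasing otherwise (giving $\action{i}{r}^* = \estimatetol{r}$). When $\estimatesur{r} < 1$ the coefficient of $\action{i}{r}^2$ is $-\estimatesev{r}(1-\estimatesur{r}) < 0$, so $\E{U_{i,r}}$ is strictly concave on $(\estimatetol{r},\desire{i}]$; setting its derivative to zero and simplifying yields a unique unconstrained maximizer equal to $d_{i,r}^{\text{pro}} = \frac{1}{2}\big(\estimatetol{r} + (\boldness{i} - \estimatesur{r}\estimatesev{r})/((1-\estimatesur{r})\estimatesev{r})\big)$.

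Finally I would project this maximizer onto the feasible interval. By concavity: if $d_{i,r}^{\text{pro}} \leq \estimatetol{r}$ the utility is decreasing on $(\estimatetol{r},\desire{i}]$ and $\action{i}{r}^* = \estimatetol{r}$; if $d_{i,r}^{\text{pro}} \geq \desire{i}$ it is increasing there and $\action{i}{r}^* = \desire{i}$; and if $\estimatetol{r} < d_{i,r}^{\text{pro}} < \desire{i}$ then $\action{i}{r}^* = d_{i,r}^{\text{pro}}$. Combining these with the $\estimatesur{r} = 1$ dichotomy (where $d_{i,r}^{\text{pro}}$ is undefined, so that branch is vacuous) and with the $\desire{i} \leq \estimatetol{r}$ case reproduces the three-way split in the statement; replacing the estimates $\estimatetol{r},\estimatesev{r},\estimatesur{r}$ by the true parameters $\tolerate{r},\severity{r},\surveil{r}$ (the special case of correct estimates) gives the stated form. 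I expect the only real obstacle to be bookkeeping: expanding the quadratic correctly, checking that its vertex simplifies to $d_{i,r}^{\text{pro}}$, and handling the degenerate $\estimatesur{r} = 1$ case where the parabola collapses to a line — so that partial self-censorship cannot arise under perfect surveillance, a situation the linear uniform analysis never encountered.
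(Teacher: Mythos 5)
Your proposal is correct and follows essentially the same route as the paper's own proof: split on $\desire{i} \leq \tolerate{r}$ vs.\ $\desire{i} > \tolerate{r}$, handle $\surveil{r} = 1$ as the linear case decided by $\boldness{i}$ vs.\ $\severity{r}$, and for $\surveil{r} < 1$ maximize the concave quadratic with vertex $d_{i,r}^{\text{pro}}$ and project onto $[\tolerate{r}, \desire{i}]$, finally substituting true parameters for estimates. Your explicit note on continuity at $\tolerate{r}$ and the slope-based argument for $\surveil{r}=1$ are minor presentational variants of the paper's endpoint comparison, not a different method.
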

\begin{proof}
    When $\punish$ is proportional (i.e., linear), we have $\punish\left(\action{i}{r}, \estimatetol{r}, \estimatesev{r}, \estimatesur{r}\right) = \estimatesev{r} \cdot (\action{i}{r} - \estimatetol{r})$ when $\action{i}{r} > \estimatetol{r}$ and $0$ otherwise, where again $\estimatetol{r} \in [0, 1]$ and $\estimatesev{r} > 0$ are the individual's estimates of the authority's tolerance and severity in round $r$, respectively.
    In this case, the expected punishment is
    \begin{equation}
        \E{\punish\left(\action{i}{r}, \estimatetol{r}, \estimatesev{r}, \estimatesur{r}\right)} = \left\{\begin{array}{ll}
            0 & \text{if $\action{i}{r} \leq \estimatetol{r}$}; \\
            \estimatesev{r} \cdot (\action{i}{r} - \estimatetol{r}) \cdot (\estimatesur{r} + (1 - \estimatesur{r}) \cdot \action{i}{r}) & \text{otherwise,}
        \end{array}\right.
    \end{equation}
    and the corresponding expected utility $\E{U_{i,r}}$ is
    \begin{align}
        \E{U_{i,r}} &= \boldness{i} \cdot (1 - \desire{i} + \action{i}{r}) - \E{\punish\left(\action{i}{r}, \estimatetol{r}, \estimatesev{r}, \estimatesur{r}\right)} \\
        &= \left\{\begin{array}{ll}
            \boldness{i} \cdot (1 - \desire{i} + \action{i}{r}) & \text{if $\action{i}{r} \leq \estimatetol{i}{r}$}; \\
            \boldness{i} \cdot (1 - \desire{i} + \action{i}{r}) - \estimatesev{r} \cdot (\action{i}{r} - \estimatetol{r}) \cdot (\estimatesur{r} + (1 - \estimatesur{r}) \cdot \action{i}{r}) & \text{otherwise.}
        \end{array}\right.
    \end{align}
    
    As in the case where $\punish$ is uniform, if $\desire{i} \leq \estimatetol{r}$ then the individual believes it can act according to its desired dissent without punishment, so it does so with $\action{i}{r}^* = \desire{i}$.
    
    Suppose instead that $\desire{i} > \estimatetol{r}$.
    Again, $\E{U_{i,r}}$ is linearly increasing as $\action{i}{r}$ increases from $0$ to $\estimatetol{r}$, the interval without punishment.
    As $\action{i}{r}$ increases from $\estimatetol{r}$ to $\desire{i}$, there are two cases.
    If the authority's surveillance is perfect (i.e., $\estimatesur{r} = 1$), then $\E{U_{i,r}}$ increases linearly by acting closer to its dissent desire but decreases linearly as the cost of punishment increases.
    The combination of these effects is linear in $\action{i}{r}$, so the critical values are $\action{i}{r}^* \in \{\estimatetol{r}, \desire{i}\}$.
    In this case, we find that
    \begin{equation}
        \E{U_{i,r} \mid \action{i}{r} = \desire{i}} \geq \E{U_{i,r} \mid \action{i}{r} = \estimatetol{r}} \iff \boldness{i} \geq \estimatesev{r}.
    \end{equation}
    
    Otherwise, if $\estimatesur{r} < 1$, then the individual believes there is some probability that the authority will not observe its action.
    Both the probability of being observed and the subsequent cost of punishment increase linearly with $\action{i}{r}$, meaning that $\E{U_{i,r}}$ increases linearly as it acts closer to its desire but decreases quadratically due to expected punishment.
    In this case, $\frac{\partial^2}{\partial \action{i}{r}} \E{U_{i,r}} < 0$; i.e., expected utility is a concave-down parabola.
    The parabola reaches its maximum value at
    \begin{equation}
        d_{i,r}^{\text{pro}} = \frac{1}{2}\left(\estimatetol{r} + \frac{\boldness{i} - \estimatesur{r} \cdot \estimatesev{r}}{(1 - \estimatesur{r}) \cdot \estimatesev{r}}\right).
    \end{equation}
    The parabola's concavity immediately implies that the optimal action dissent in this case is
    \begin{equation}
        \action{i}{r}^* = \left\{\begin{array}{ll}
            \estimatetol{r} & \text{if $d_{i,r}^{\text{pro}} \leq \estimatetol{r}$}; \\
            \desire{i} & \text{if $d_{i,r}^{\text{pro}} \geq \desire{i}$}; \\
            d_{i,r}^{\text{pro}} & \text{otherwise}.
        \end{array}\right.
    \end{equation}
    
    Combining all of these cases and considering the special case when the individuals' estimates are correct yields the theorem.
    Again, if the estimates are noisy or otherwise inaccurate, replacing $\tolerate{r}$, $\severity{r}$, and $\surveil{r}$ with $\estimatetol{r}$, $\estimatesev{r}$, and $\estimatesur{r}$ in the theorem yields the individual's utility-maximizing action with respect to its estimates.
\end{proof}

\subsection{The Authority's Draconian Policy}

The following derivation yields the authority's utility-maximizing draconian policy that represses all dissent through self-censorship without having to enact any punishments.

\begin{theorem}[\eqtext~\ref{eq:draconian}] \label{thm:draconian}
    Given a population's desired dissent and boldness parameters $\{(\desire{i}, \boldness{i})\}_{i=1}^n$, an optimal policy for the authority is $\tolerate{}^* = 0$, $\surveil{}^* = 1$, and $\severity{}^* \geq \max_i\{\desire{i} \cdot \boldness{i}\}$ if $\punish$ is uniform or $\severity{}^* > \max_i\{\boldness{i}\}$ if $\punish$ is proportional.
\end{theorem}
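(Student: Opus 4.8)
The plan is to prove optimality by sandwiching: first show that $U_{A,r} \le 0$ for \emph{every} conceivable policy, and then verify that the proposed draconian policy attains $U_{A,r} = 0$. For the upper bound, note that $\adamancy > 0$, every action satisfies $\action{i}{r} \ge 0$, and the punishment function is non-negative by \eqtext~\ref{eq:punish}; hence both sums in \eqtext~\ref{eq:authutility} are non-positive, so $U_{A,r} \le 0$ always, with equality iff $\sum_i \action{i}{r} = 0$ and $\sum_i \punish(\cdot) = 0$. Consequently it suffices to exhibit a policy under which every individual's utility-maximizing action is $\action{i}{r}^* = 0$: the political cost then vanishes, and because each such action equals $\tolerate{}^* = 0$, no infraction is committed, so the punishment cost vanishes as well and $U_{A,r} = 0$.

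Next I would verify that the stated policy $(\tolerate{}^*, \severity{}^*, \surveil{}^*) = (0, \severity{}^*, 1)$ indeed forces $\action{i}{r}^* = 0$ for every $I_i$, splitting on the punishment function and invoking Theorems~\ref{thm:optaction:uniform} and~\ref{thm:optaction:proportional}. The case $\desire{i} = 0$ is immediate since $\action{i}{r} \in [0, \desire{i}] = \{0\}$, so assume $\desire{i} > 0 = \tolerate{}^*$. For uniform $\punish$, the defiance branch of Theorem~\ref{thm:optaction:uniform} requires $\boldness{i} > (1 - \surveil{}^*)\severity{}^* = 0$ together with $\desire{i} > d_{i,r}^{\text{uni}}$; substituting $\tolerate{r} = 0$ and $\surveil{r} = 1$ into \eqtext~\ref{eq:optaction:dcon} collapses $d_{i,r}^{\text{uni}}$ to $\severity{}^*/\boldness{i}$, and $\severity{}^* \ge \max_i\{\desire{i}\boldness{i}\} \ge \desire{i}\boldness{i}$ gives $\desire{i} \le \severity{}^*/\boldness{i} = d_{i,r}^{\text{uni}}$, so the strict inequality fails and the theorem returns $\action{i}{r}^* = \tolerate{}^* = 0$. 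For proportional $\punish$, $\surveil{}^* = 1$ places us in the $\surveil{r} = 1$ regime of Theorem~\ref{thm:optaction:proportional}, whose self-censoring branch $\action{i}{r}^* = \tolerate{r}$ is selected precisely when $\boldness{i} < \severity{}^*$; this holds because $\severity{}^* > \max_i\{\boldness{i}\} \ge \boldness{i}$, so again $\action{i}{r}^* = \tolerate{}^* = 0$.

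Combining the two cases, every individual's best response under the proposed policy is $0$, so $U_{A,r} = 0$ matches the upper bound and the policy is optimal (equivalently, it realizes the leader--follower value in which the authority moves first and individuals best-respond). I do not expect a genuinely hard step: the entire argument is bookkeeping over the case splits of Theorems~\ref{thm:optaction:uniform}--\ref{thm:optaction:proportional} and the evaluation of the critical-dissent expressions at the boundary values $\tolerate{r} = 0$, $\surveil{r} = 1$. The one point deserving a sentence of care is why the proportional threshold on $\severity{}^*$ must be strict while the uniform one need not be: at $\severity{}^* = \max_i\{\boldness{i}\}$ an individual attaining that maximum has $\boldness{i} = \severity{}^*$ and is defiant under perfect surveillance, whereas at $\severity{}^* = \max_i\{\desire{i}\boldness{i}\}$ the borderline individual satisfies $\desire{i} = d_{i,r}^{\text{uni}}$, a tie that Theorem~\ref{thm:optaction:uniform} already resolves in favor of full self-censorship.
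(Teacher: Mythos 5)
Your proposal is correct and follows essentially the same route as the paper's proof: bound $U_{A,r} \leq 0$ using the nonnegativity of actions and punishments, then substitute $\tolerate{}^* = 0$, $\surveil{}^* = 1$ into Theorems~\ref{thm:optaction:uniform} and~\ref{thm:optaction:proportional} to verify that the stated severity thresholds force every individual to fully self-censor, attaining $U_{A,r} = 0$. Your tie-breaking remark explaining why the uniform bound can be non-strict while the proportional bound must be strict is consistent with the case structure of those theorems and is a welcome clarification the paper leaves implicit.
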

\begin{proof}
    \eqtext~\ref{eq:authutility} defines the authority's utility as
    \begin{equation}
        U_{A,r} = -\adamancy \cdot \sum_{i=1}^n \action{i}{r} - \sum_{i=1}^n \punish(\action{i}{r}, \tolerate{r}, \severity{r}, \surveil{r}),
    \end{equation}
    where $\adamancy > 0$ is the authority's adamancy balancing the relative importance of expressed dissent compared to the cost of punishing.
    Since individuals' actions $\action{i}{r}$ and the authority's punishments $\punish(\action{i}{r}, \tolerate{r}, \severity{r}, \surveil{r})$ are both nonnegative, the maximum value of $U_{A,r}$ is zero.
    Thus, any policy that achieves $\E{U_{A,r}} = 0$ (where expectation is taken over the authority's probabilities of observing individuals' actions) is optimal.

    If the authority sets its tolerance $\tolerate{r} > 0$, Theorems~\ref{thm:optaction:uniform} and~\ref{thm:optaction:proportional} show that individuals with $\desire{i} \leq \tolerate{r}$ will simply enact their (potentially non-zero) desired dissent.
    Thus, if the authority wants to achieve zero political cost (i.e., $\sum_{i=1}^n\action{i}{r} = 0$), it must necessarily set $\tolerate{}^* = 0$.
    Substituting this value along with perfect surveillance ($\surveil{}^* = 1$) into individuals' optimal actions (Theorems~\ref{thm:optaction:uniform} and~\ref{thm:optaction:proportional}) yields
    \begin{align}
        \left.\action{i}{r}^* \right|_{\tolerate{r} = 0, \surveil{r} = 1, \punish = \text{uniform}} &= \left\{\begin{array}{ll}
            \desire{i} & \text{if $\desire{i} > \severity{r} / \boldness{i}$}; \\
            0 & \text{otherwise.}
        \end{array}\right. \\
        \left.\action{i}{r}^* \right|_{\tolerate{r} = 0, \surveil{r} = 1, \punish = \text{proportional}} &= \left\{\begin{array}{ll}
            \desire{i} & \text{if $\boldness{i} \geq \severity{r}$}; \\
            0 & \text{otherwise.}
        \end{array}\right.
    \end{align}
    Thus, the authority causes every individual to self-censor ($\action{i}{r}^* = \tolerate{}^* = 0$) if and only if $\severity{r} \geq \desire{i} \cdot \boldness{i}$ under uniform punishment or $\severity{r} > \boldness{i}$ under proportional punishment, for all individuals $I_i$.
\end{proof}

This draconian policy is, in fact, a Nash equilibrium if authority--individual interactions are formulated as a one-round leader-follower game.
Specifically, if the authority acts first and individuals act second, without collusion and under the assumption that the authority's threat of punishment is credible then we have the following result.

\begin{corollary}
    The authority's draconian policy and the corresponding individuals' optimal actions form a Nash equilibrium.
\end{corollary}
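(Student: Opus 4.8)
The plan is to verify the two unilateral-deviation conditions defining a Nash equilibrium for the profile consisting of the authority's draconian policy $(\tolerate{}^* = 0,\ \surveil{}^* = 1,\ \severity{}^*)$ from Theorem~\ref{thm:draconian} together with the action profile in which every individual fully self-censors, $\action{i}{r}^* = 0$. First I would check that no individual can profitably deviate; then that the authority cannot. The first check is essentially a restatement of Theorems~\ref{thm:optaction:uniform} and~\ref{thm:optaction:proportional}, and the second reduces to the observation --- already used in the proof of Theorem~\ref{thm:draconian} --- that this profile attains the global maximum of the authority's utility.

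For the individual side, the key structural fact is that $I_i$'s expected utility in \eqtext~\ref{eq:indutility} depends only on its own action $\action{i}{r}$ and the authority's policy, never on other individuals' actions. So, holding the authority's policy at the draconian values and the other individuals' actions fixed (at $0$, though it is irrelevant), $I_i$'s best response is $\mathrm{argmax}_{\action{i}{r}} \E{U_{i,r}}$ evaluated at $\tolerate{r} = 0$, $\surveil{r} = 1$, $\severity{r} = \severity{}^*$. By Theorem~\ref{thm:optaction:uniform} (uniform $\punish$) this equals $\desire{i}$ iff $\desire{i} > \severity{}^*/\boldness{i}$, ruled out by $\severity{}^* \ge \max_i\{\desire{i}\boldness{i}\}$; by Theorem~\ref{thm:optaction:proportional} (proportional $\punish$) it equals $\desire{i}$ iff $\boldness{i} \ge \severity{}^*$, ruled out by $\severity{}^* > \max_i\{\boldness{i}\}$. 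In both cases $\action{i}{r}^* = 0 = \tolerate{}^*$ is a best response, so no individual deviation is profitable.

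For the authority side, recall that individuals' actions and the punishments $\punish(\cdot)$ are nonnegative, so $U_{A,r} \le 0$ for every policy and action profile, while the draconian profile attains $\E{U_{A,r}} = 0$. Evaluating a unilateral authority deviation means holding the individuals' actions fixed at $\action{i}{r} = 0$, and then \eqtext~\ref{eq:authutility} gives $U_{A,r} = -\adamancy\sum_i 0 - \sum_i \punish(0, \tolerate{r}, \severity{r}, \surveil{r}) = 0$ for any $(\tolerate{r}, \severity{r}, \surveil{r})$, since $0 \le \tolerate{r}$ means nothing is punished. Hence the authority is indifferent among all policies against the all-zero action profile, so the draconian policy is (weakly) a best response; combined with the individual check, this yields the Nash equilibrium.

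The one point needing care --- and where the stated hypotheses do real work --- is commitment/credibility. Because punishment is costly to the authority (the $-\sum_i\punish$ term in \eqtext~\ref{eq:authutility}), an authority free to revise its behavior after seeing actions would prefer not to punish, which would unravel the individuals' incentive to self-censor. The model avoids this by having the authority commit to its entire policy, including the mechanical rule $\punish$, before individuals move --- i.e., we are in the leader--follower game with a binding first stage, exactly as the surrounding text assumes --- and I would state this explicitly. I would also note that the argument is insensitive to whether one uses the static Nash notion or subgame perfection: since $0$ is the global upper bound on $U_{A,r}$, the draconian policy stays optimal even if an authority deviation is allowed to trigger re-optimization by the individuals. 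The main (mild) obstacle is therefore not any computation but pinning down the game form and credibility hypothesis precisely enough that ``Nash equilibrium'' is unambiguous; once that is done, everything follows from Theorems~\ref{thm:optaction:uniform}--\ref{thm:draconian}.
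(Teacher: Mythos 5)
Your proposal is correct and follows essentially the same route as the paper's own proof: individuals' no-deviation claims come from Theorems~\ref{thm:optaction:uniform}--\ref{thm:optaction:proportional} evaluated at $\tolerate{}^*=0$, $\surveil{}^*=1$, $\severity{}^*$, and the authority's side comes from $U_{A,r}\leq 0$ with the draconian profile attaining $0$. Your extra observations---that the authority is in fact indifferent among all policies against the all-zero action profile, and that the Nash claim rests on the committed leader--follower game form with credible punishment---are refinements of, not departures from, the paper's argument (the credibility assumption is stated in the surrounding text rather than the proof).
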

\begin{proof}
    In the proof of Theorem~\ref{thm:draconian}, we showed that under the draconian policy $(\tolerate{}^*, \severity{}^*, \surveil{}^*)$, every individual completely self-censors $(\action{i}{r}^* = \tolerate{}^* = 0)$ regardless of punishment function.
    The uniqueness of this utility-maximizing action guaranteed by Theorems~\ref{thm:optaction:uniform} and~\ref{thm:optaction:proportional} ensures that no individual can deviate from this strategy of complete self-censorship without harming their utility.
    The proof of Theorem~\ref{thm:draconian} also shows that the authority's utility is nonpositive (i.e., $U_{A,r} \leq 0$) and that the draconian policy achieves the maximum possible value, $U_{A,r} = 0$.
    Thus, it is clear that no deviation from this strategy could improve the authority's utility.
    Therefore, these strategies form a (not necessarily unique) Nash equilibrium.
\end{proof}

\subsection{Adaptive Authority Experiments}

Algorithm~\ref{alg:rmhc} details pseudocode for the random mutation hill climbing authority's adaptation rules.

\begin{algorithm}[H]
\caption{Random Mutation Hill Climbing Authority: Single Trial} \label{alg:rmhc}
\begin{algorithmic}[1]
    \State \textbf{Inputs}:
    \begin{itemize}
        \item $n > 0$: an integer number of individuals in the population
        \item $R > 0$: an integer number of rounds to simulate
        \item $\desire{} \in (0, 0.5)$: a float mean population desired dissent
        \item $\boldness{} > 0$: a float mean population boldness
        \item $\punish$: the authority's punishment function
        \item $\tolerate{0} \in [0, 1]$: the authority's float initial tolerance
        \item $\severity{0} > 0$: the authority's float initial punishment severity
        \item $\surveil{0} \in [0, 1]$: the authority's float initial surveillance
        \item $\alpha > 0$: the authority's float adamancy
        \item $\varepsilon > 0$: the float radius of the random mutation update window
    \end{itemize}

    \State Sample $\{\desire{i}\}_{i=1}^n$ from a truncated exponential distribution with mean $\desire{}$ and bounds $[0, 1]$.

    \State Sample $\{\boldness{i}\}_{i=1}^n$ from an exponential distribution with mean $\boldness{}$.

    \State Let $\{\action{i}{0}^*\}_{i=1}^n$ be the individuals' optimal actions under the authority's initial policy $(\tolerate{0}, \severity{0}, \surveil{0})$.

    \State Let $c_{\text{pol},0} = \sum_{i=1}^n \action{i}{0}^*$ and $c_{\text{pun},0} = \sum_{i=1}^n \punish(\action{i}{0}^*, \tolerate{0}, \severity{0}, \surveil{0})$ be the authority's initial political and punishment costs.

    \For {$r \in \{1, \ldots, R\}$}
        \State Choose a parameter $p \in \{\tolerate{}, \severity{}, \surveil{}\}$ uniformly at random.

        \If {$p = \tolerate{}$}
            \State Choose $\tolerate{}'$ uniformly at random in $[\max\{0, \tolerate{r-1} - \varepsilon\}, \min\{1, \tolerate{r-1} + \varepsilon\}]$.
            \State Set $\severity{}' \gets \severity{r-1}$ and $\surveil{}' \gets \surveil{r-1}$.
        \ElsIf {$p = \severity{}$}
            \State Choose $\severity{}'$ uniformly at random in $[\max\{10^{-9}, \severity{r-1} - \varepsilon\}, \severity{r-1} + \varepsilon]$.
            \State Set $\tolerate{}' \gets \tolerate{r-1}$ and $\surveil{}' \gets \surveil{r-1}$.
        \ElsIf {$p = \surveil{}$}
            \State Choose $\surveil{}'$ uniformly at random in $[\max\{0, \surveil{r-1} - \varepsilon\}, \min\{1, \surveil{r-1} + \varepsilon\}]$.
            \State Set $\tolerate{}' \gets \tolerate{r-1}$ and $\severity{}' \gets \severity{r-1}$.
        \EndIf

        \State Let $\{\action{i}{r}^*\}_{i=1}^n$ be the individuals' optimal actions under the mutated policy $(\tolerate{}', \severity{}', \surveil{}')$.

        \State Let $c_{\text{pol},r} = \sum_{i=1}^n \action{i}{r}^*$ and $c_{\text{pun},r} = \sum_{i=1}^n \punish(\action{i}{r}^*, \tolerate{}', \severity{}', \surveil{}')$ be the authority's political and punishment costs in round $r$.

        \If {$\alpha \cdot c_{\text{pol},r} + c_{\text{pun},r} \leq \alpha \cdot c_{\text{pol},r-1} + c_{\text{pun},r-1}$}
            \State Set $(\tolerate{r}, \severity{r}, \surveil{r}) \gets (\tolerate{}', \severity{}', \surveil{}')$.
        \Else {}
            \State Set $(\tolerate{r}, \severity{r}, \surveil{r}) \gets (\tolerate{r-1}, \severity{r-1}, \surveil{r-1})$.
        \EndIf
    \EndFor

    \State \Return the authority's policies $\{(\tolerate{r}, \severity{r}, \surveil{r})\}_{r=0}^R$ and costs $\{(c_{\text{pol},r}, c_{\text{pun},r})\}_{r=0}^R$ and the individuals' parameters $\{(\desire{i}, \boldness{i})\}_{i=1}^n$.
\end{algorithmic}
\end{algorithm}

All experiments were run on a Linux machine with a 5.7 GHz Ryzen 9 7950X CPU (16 cores, 32 threads) and 64 GB of memory.
The time evolutions shown in \figtext~\ref{fig:adaptiveauthorityevo} and \figtext~\ref{si:fig:adaptiveauthorityevo} are obtained with:
\begin{verbatim}
  python hillclimbing.py -N 100000 -R 5000 -D 0.25 -B 2 -P uniform --seed 458734673

  python hillclimbing.py -N 100000 -R 5000 -D 0.25 -B 8 -P uniform --seed 458734673

  python hillclimbing.py -N 100000 -R 5000 -D 0.25 -B 1 -P proportional \
  --seed 458734673

  python hillclimbing.py -N 100000 -R 5000 -D 0.25 -B 3 -P proportional \
  --seed 458734673
\end{verbatim}
The parameter sweeps shown in \figtext~\ref{fig:adaptiveauthoritysweep} and \figstext~\ref{si:fig:adaptiveauthoritysweep}--\ref{si:fig:suppressiontime} are obtained with:
\begin{verbatim}
  python hillclimbing.py --sweep -N 100000 -R 10000 -P uniform -A 1.0 -E 0.05 \
  --seed 1978572166 --granularity 50 --trials 50 --threads 32

  python hillclimbing.py --sweep -N 100000 -R 10000 -P proportional -A 1.0 -E 0.05 \
  --seed 1978572166 --granularity 50 --trials 50 --threads 32
\end{verbatim}

\subsection{Preliminary Explorations of Assimilative Opinion Dynamics}

Suppose that the individuals $\mathcal{I}$ are arranged in a connected, undirected network $G = (\mathcal{I}, E)$ representing their population's interaction structure and some \textit{social influence} occurs among neighbors at the start of each round.
For example, ``socialization'' could occur by shifting each individual's desired dissent towards its neighbors' mean action in the last round, which then affects the individual's optimal action in this round according to Theorems~\ref{thm:optaction:uniform} or~\ref{thm:optaction:proportional}.
Among the plethora of possible opinion dynamics, only those that change individuals' desired dissents or boldness values are interesting.
Rules that only modify individuals' actions (e.g., ``enact the action halfway between your optimal action $\action{i}{r}^*$ and the mean of your neighbors' actions in the last round'') are inherently anchored to individuals' unchanging, underlying parameters, stopping the population from exhibiting any long-term deviations from their initial behavior.

Another class of interaction rules yielding a predictable outcome is \textit{distributed averaging}.
In each round, individuals update their parameters by shifting them some fraction of the way towards the (possibly weighted) average of their neighbors' parameters in the last round:
\begin{equation}
    \desire{i,r+1} = \frac{w \cdot \desire{i,r} + \sum_{I_j \in N(I_i)} \desire{j,r}}{w + |N(I_i)|}
    \quad\quad\text{or}\quad\quad
    \boldness{i,r+1} = \frac{w \cdot \boldness{i,r} + \sum_{I_j \in N(I_i)} \boldness{j,r}}{w + |N(I_i)|},
\end{equation}
where $N(I_i)$ is the neighborhood of individual $I_i$ in $G$ and $w > 0$ is a weight parameter.
Perhaps surprisingly, regardless of the structure of the network $G$ or the weight parameter $w$, these rules always yield consensus at the centrality-weighted average of the individuals' initial values~\cite{Como2016-localaveraging}.
Thus, the long-run dissent behavior of a networked population of individuals performing distributed averaging can be predicted directly from (\textit{i}) each individual's initial desired dissent and boldness values, (\textit{ii}) each individual's centrality in the network, and (\textit{iii}) and the authority's tolerance, severity, and surveillance parameters.

Supposing, as we do in our adaptive authority experiments, that a population's desired dissents and boldness values are exponentially distributed (i.e., very few individuals are initially high-desire, high-boldness), this result suggests that social influence almost always inflicts a chilling effect, pulling more extreme dissidents towards a self-censoring center instead of the other way around.
The only exception is when these more extreme dissidents are exponentially more central in the network than everyone else.
Thus, distributed averaging alone is likely insufficient for population-level cascades of dissent that we observe in reality~\cite{Chenoweth2011-whycivil,Chenoweth2021-civilresistance}.


\begin{figure}
    \centering
    \begin{subfigure}{.48\textwidth}
        \centering
        \caption{Uniform Punishment, $\boldness{} = 2$ (same as \figtext~\ref{fig:adaptiveauthorityevo})}
        \label{si:fig:adaptiveauthorityevo:uniformquell}
        \includegraphics[width=\textwidth]{rmhc_trial_N100000_R5000_uniform_B2_S458734673.pdf}
    \end{subfigure}
    \hfill
    \begin{subfigure}{.48\textwidth}
        \centering
        \caption{Uniform Punishment, $\boldness{} = 8$}
        \label{si:fig:adaptiveauthorityevo:uniformresist}
        \includegraphics[width=\textwidth]{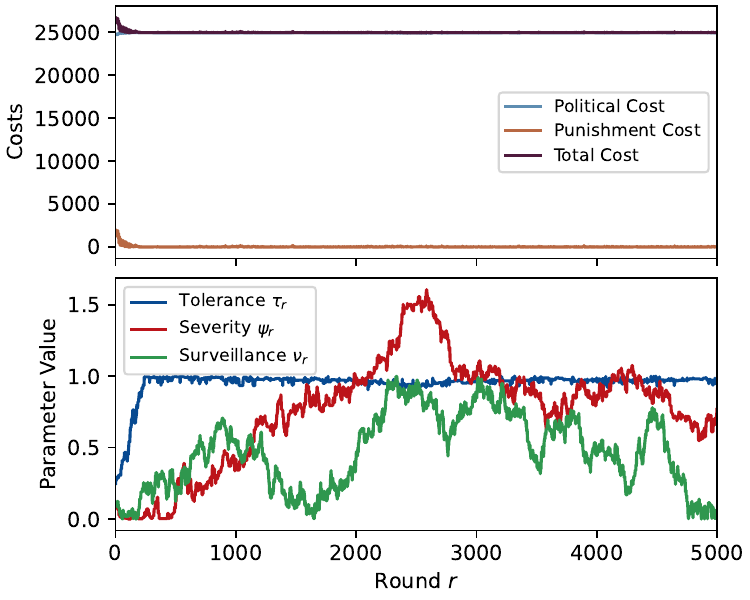}
    \end{subfigure}
    \\ \bigskip
    \begin{subfigure}{.48\textwidth}
        \centering
        \caption{Proportional Punishment, $\boldness{} = 1$}
        \label{si:fig:adaptiveauthorityevo:propquell}
        \includegraphics[width=\textwidth]{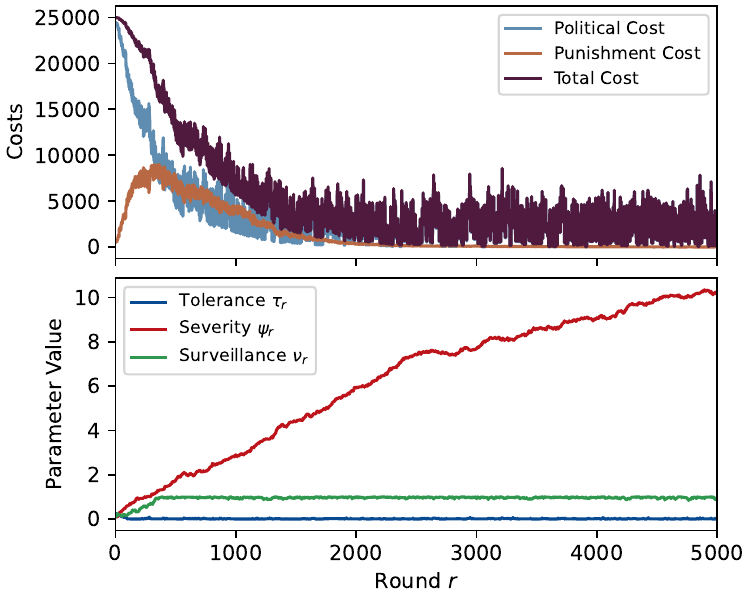}
    \end{subfigure}
    \hfill
    \begin{subfigure}{.48\textwidth}
        \centering
        \caption{Proportional Punishment, $\boldness{} = 3$}
        \label{si:fig:adaptiveauthorityevo:propresist}
        \includegraphics[width=\textwidth]{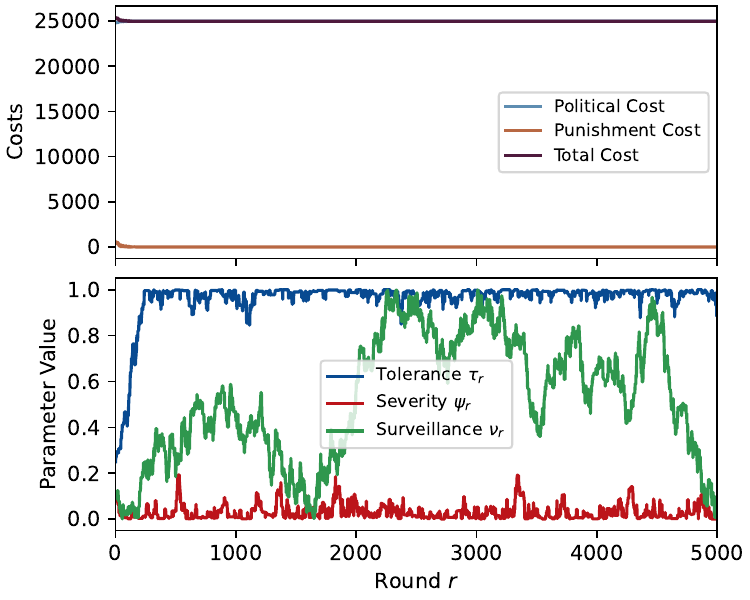}
    \end{subfigure}
    \caption{\textit{Time Evolutions of an Adaptive Authority.}
    Expanding \figtext~\ref{fig:adaptiveauthorityevo}, this shows the adaptive authority's costs (top panels) and parameters (bottom panels) over \numprint{5000} rounds with adamancy $\adamancy =$ 1 for different parameter regimes and punishment functions.
    In each trial, the population has $n =$ \numprint{100000} individuals whose desired dissents are sampled independently from a truncated exponential distribution with mean $\desire{} = 0.25$ and bounds $[0, 1]$.
    Their boldness constants are also sampled from exponential distributions whose means $\boldness{}$ are shown in the subcaptions.
    \textbf{(a)},\textbf{(c)} For populations with low boldness, regardless of punishment function, the adaptive authority rapidly discovers a near-draconian policy that suppresses most dissent (top panels, blue ``political cost'') without having to enact nearly any punishments (top panels, orange ``punishment cost'').
    \textbf{(b)},\textbf{(d)} For populations with sufficiently large boldness, the authority does not discover a dissent-suppressing policy within the simulated time frame, instead continuously exploring different tolerant policies ($\tolerate{} \approx 1$) where individuals fully express their desired dissent without fear of punishment.}
    \label{si:fig:adaptiveauthorityevo}
\end{figure}

\begin{figure}
    \centering
    \begin{subfigure}{.48\textwidth}
        \centering
        \caption{Uniform Punishment (same as \figtext~\ref{fig:adaptiveauthoritysweep})}
        \label{si:fig:adaptiveauthoritysweep:uniform}
        \includegraphics[width=\textwidth]{sweep_N100000_R10000_uniform_S1978572166.png}
    \end{subfigure}
    \hfill
    \begin{subfigure}{.48\textwidth}
        \centering
        \caption{Proportional Punishment}
        \label{si:fig:adaptiveauthoritysweep:prop}
        \includegraphics[width=\textwidth]{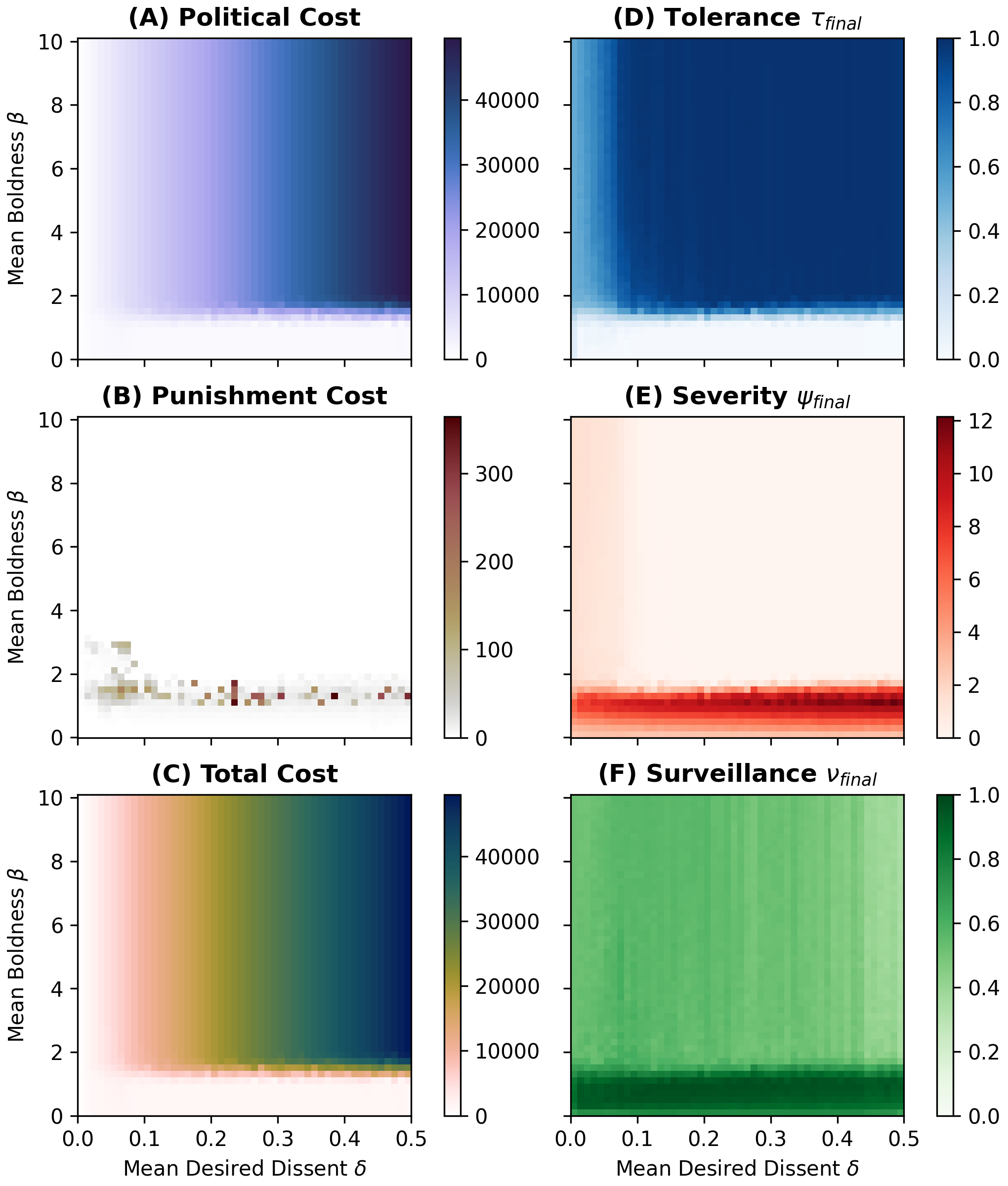}
    \end{subfigure}
    \caption{\textit{Authority Costs and Policies After Adaptation.}
    Expanding \figtext~\ref{fig:adaptiveauthoritysweep}, this shows the authority's average final \textbf{(A)}--\textbf{(C)} costs and \textbf{(D)}--\textbf{(F)} parameter values after \numprint{10000} rounds of random mutation hill climbing adaptation across 50 independent trials per $(\desire{}, \boldness{})$ pair, where $\desire{} \in [0.005, 0.495]$ and $\boldness{} \in [0.1, 10]$ are the means of the population's exponential distributions of desired dissents and boldness constants, respectively.
    Each trial involves $n =$ \numprint{100000} individuals and the authority uses adamancy $\adamancy =$ 1 with \textbf{(a)} uniform or \textbf{(b)} proportional punishment.
    Comparing panel \textbf{(C)} across punishment functions, adaptation with uniform punishment yields minimal total costs (i.e., near-total suppression of dissent and low punishment) for much larger regions of population parameter space than with proportional punishment.
    Under proportional punishment, a critical boldness value $\boldness{}^* \approx 1.2$ separates the discovery of a draconian policy ($\boldness{} < \boldness{}^*$, \figtext~\ref{si:fig:adaptiveauthorityevo:propquell}) from continuously exploring tolerant policies where individuals fully express their desired dissent ($\boldness{} > \boldness{}^*$, \figtext~\ref{si:fig:adaptiveauthorityevo:propresist}) regardless of the population's mean desired dissent $\desire{}$.}
    \label{si:fig:adaptiveauthoritysweep}
\end{figure}

\begin{figure}
    \centering
    \begin{subfigure}{.48\textwidth}
        \centering
        \caption{Uniform Punishment}
        \label{si:fig:suppressiontime:uniform}
        \includegraphics[width=\textwidth]{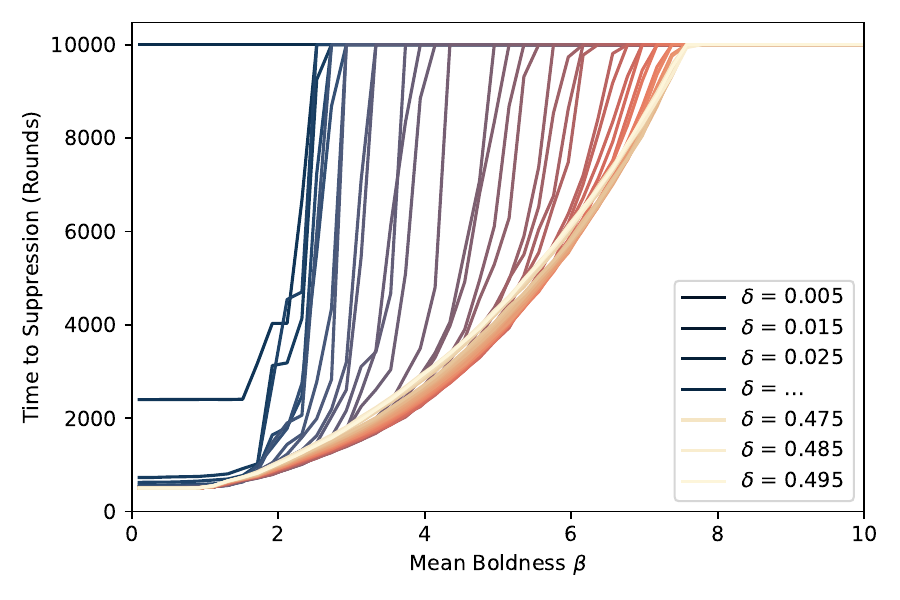}
    \end{subfigure}
    \hfill
    \begin{subfigure}{.48\textwidth}
        \centering
        \caption{Proportional Punishment}
        \label{si:fig:suppressiontime:prop}
        \includegraphics[width=\textwidth]{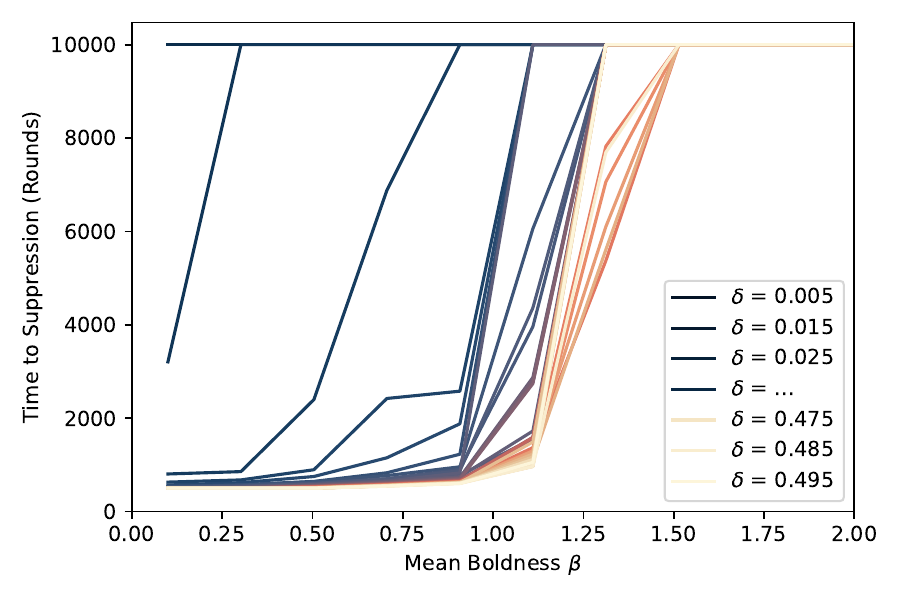}
    \end{subfigure}
    \caption{\textit{Times to Suppression by an Adaptive Authority.}
    We define the \textit{$(\gamma, w)$-time to suppression} by an adaptive authority as the earliest round $r \geq w$ such that the authority's mean political cost in a sliding window of rounds $[r-w, r]$ falls below a $\gamma$-fraction of the population's maximum possible political cost (i.e., $\sum_{i=1}^n\desire{i}$).
    Here, we show the (0.25, 500)-times to suppression for the adaptive authority trials shown in \figtext~\ref{si:fig:adaptiveauthoritysweep} under \textbf{(a)} uniform and \textbf{(b)} proportional punishment as a function of population mean boldness $\boldness{}$ for different population mean desired dissents $\desire{}$.
    In words, these are the amounts of time passed before the adaptive authority discovers a policy that decreases the population's mean total dissent in 500-round sliding windows to 25\% of its maximum value at full defiance.
    The maximum time to suppression of \numprint{10000} rounds shown here is simply the time horizon of our simulation runs.
    \textbf{(a)} Under uniform punishment, boldness has a superlinear effect on time to suppression, with larger mean population desired dissents $\desire{}$ requiring larger mean population boldness values $\boldness{}$ to sustain widespread expression of dissent for long durations.
    \textbf{(b)} Under proportional punishment, the transition from immediate suppression to maximum time to suppression is sharper, as already evidenced by the likely critical boldness value $\boldness{}^* \approx 1.2$ discussed in \figtext~\ref{si:fig:adaptiveauthoritysweep:prop}(C).}
    \label{si:fig:suppressiontime}
\end{figure}

\end{document}